\documentclass[a4paper]{article}

\usepackage{a4wide}
\usepackage{amsmath,amssymb,amsthm}
\usepackage{mathrsfs,bbm}
\usepackage[american]{babel}
\usepackage[dvips]{graphicx}
\usepackage{array,pifont,varioref,framed}

\graphicspath{{./}{figure/}}

\title{A fully-discrete-state kinetic theory approach to modeling vehicular traffic}
\author{Luisa Fermo \\
		{\small\it Department of Mathematics and Computer Science} \\
		{\small\it University of Cagliari} \\
		{\small\it Viale Merello 92, 09123 Cagliari, Italy} \\[5mm]
		Andrea Tosin \\
		{\small\it Istituto per le Applicazioni del Calcolo ``M. Picone''} \\
		{\small\it Consiglio Nazionale delle Ricerche} \\		
		{\small\it Via dei Taurini 19, 00185 Rome, Italy}
	   }
\date{}

\newcommand{\1}{\mathbbm{1}}
\newcommand{\abs}[1]{\left\vert #1\right\vert}
\newcommand{\B}{\mathcal{B}}
\newcommand{\C}{\mathcal{C}}
\newcommand{\Dt}{\Delta{t}}

\newcommand{\f}{\mathbf{f}}
\newcommand{\g}{\mathbf{g}}
\newcommand{\fictrho}{\tilde{\rho}}
\newcommand{\In}{\textup{in}}
\newcommand{\kk}{\mathsf{k}}

\newcommand{\Lip}{\operatorname{Lip}}
\newcommand{\M}{\textup{max}}
\newcommand{\m}{\mathsf{m}}
\newcommand{\n}{\mathsf{n}}
\newcommand{\norm}[1]{\Vert #1\Vert}
\newcommand{\Out}{\textup{out}}
\newcommand{\R}{\mathbb{R}}
\renewcommand{\u}{\mathbf{u}}
\renewcommand{\v}{\mathbf{v}}

\theoremstyle{plain}\newtheorem{theorem}{Theorem}[section]
\theoremstyle{remark}\newtheorem{remark}[theorem]{Remark}
\theoremstyle{plain}\newtheorem{lemma}[theorem]{Lemma}
\theoremstyle{plain}\newtheorem{corollary}[theorem]{Corollary}
\theoremstyle{definition}\newtheorem{definition}[theorem]{Definition}
	
\begin{document}
\maketitle

\begin{abstract}
This paper presents a new mathematical model of vehicular traffic, based on the methods of the generalized kinetic theory, in which the space of microscopic states (position and velocity) of the vehicles is genuinely discrete. While in the recent literature discrete-velocity kinetic models of car traffic have already been successfully proposed, this is, to our knowledge, the first attempt to account for all aspects of the physical granularity of car flow within the formalism of the aforesaid mathematical theory. Thanks to a rich but handy structure, the resulting model allows one to easily implement and simulate various realistic scenarios giving rise to characteristic traffic phenomena of practical interest (e.g., queue formation due to roadworks or to a traffic light). Moreover, it is analytically tractable under quite general assumptions, whereby fundamental properties of the solutions can be rigorously proved.

\medskip

\noindent{\bf Keywords:} traffic granularity, generalized kinetic theory, vehicle interactions, stochastic games.

\medskip

\noindent{\bf Mathematics Subject Classification:} 34A34, 35L65, 76P05, 90B20
\end{abstract}

\section{Introduction}
\label{sect.intro}
Modeling vehicular traffic at the kinetic scale offers several advantages in terms of capturing some of the relevant aspects of the complexity of car flow. The kinetic approach is indeed suitable for an aggregate representation of the distribution of vehicles, not necessarily focused on single cars, while still allowing for a detailed characterization of the microscopic vehicle-to-vehicle dynamics, which are ultimately responsible for the large-scale behavior of the system. In this respect, kinetic models are competitive over microscopic ones, see e.g., \cite{gazis1961nfl,helbing2001trs,treiber2006dia}. In fact, they require a lower number of equations, which makes them more accessible by computational and mathematical analysis as far as the study of average collective trends is concerned. On the other hand, unlike the prevailing approach at the macroscopic scale, see e.g., \cite{colombo2002hpt,lighthill1955kw2,richards1956swh}, kinetic models are not based on \emph{a priori} closures of the equations through fundamental diagrams. In particular, if necessary they can recover the latter \emph{a posteriori} out of a lower-scale modeling of microscopic interactions. Actually, it is worth pointing out that not all macroscopic models rely strictly on fundamental diagrams. Some models get rid of them by envisaging closures of the balance equations oriented to a phenomenological interpretation of the driver behavior \cite{aw2000rso,berthelin2008mfe}. For a more comprehensive overview of vehicular traffic models at all scales, the interested reader is referred to \cite{bellomo2011mtc}.

Classically, the kinetic representation of vehicular traffic along a one-way road is provided by a distribution function $f$ over the mechanical microscopic state of the vehicles. The latter is identified by the scalar position $x \in D_x$ and speed $v \in D_v$, where $D_x, D_v \subseteq \R$ are the spatial and speed domains, respectively. While the former may either be a bounded interval, such as $[0,\,L]$, $L>0$ being the length of the road, or coincide with the whole real axis, the latter is normally of the form $D_v=[0, V_{\M}]$, where $V_{\M}>0$ is the maximum speed allowed along the road, or possibly $D_v=[0, V_{\M}']$, with $V_{\M}'\geq V_{\M}$ corresponding to the maximum average speed  attainable by a single vehicle in free flow conditions. The distribution function $f=f(t,\,x,\,v):[0,T_{\M}] \times D_x\times D_v\to\R^+$, where $T_{\M}>0$ is the final time (possibly $T_{\M}=+\infty$), is then such that $f(t,\,x,\,v)dxdv$ is the (infinitesimal) number of vehicles that at time $t$ are located between $x$ and $x+dx$ and travel with a speed between $v$ and $v+dv$.

In the above presentation of the kinetic approach, the spatial position and speed of the vehicles are tacitly assumed to be continuously distributed over $D_x \times D_v$. However, this does not reflect correctly the physical reality of vehicular flow. Indeed, the number of vehicles along a road is normally not large enough for the continuity of the distribution function over the microscopic states to be an acceptable approximation (like in the classical kinetic theory of gases). Vehicles do not span continuously the whole set of admissible speeds; likewise, one cannot expect that they are so spread in space that at every point $x$ of the road there may be some. In other words, the actual distribution of vehicles in space, as well as that of their speeds, is strongly \emph{granular}. It is reasonable to expect this fact to have a nontrivial impact on the resulting dynamics. It is hence worth being taken into account in a mathematical model.

Recently discrete velocity models have been introduced \cite{bellouquid2012mvt,coscia2007mtv,delitala2007mmv}. The idea is to relax the hypothesis that the speed distribution is continuous  by introducing in the domain $D_v$ a lattice of discrete speeds $\{v_1,\,v_2,\,\dots,\,v_n\}$, with $v_1=0$, $v_n=V_{\M}$, $v_i<v_{i+1}$ for all $i=1,\,\dots,\,n-1$. The physical system is then described by $n$ distribution functions $f_j=f_j(t,\,x):[0,\,T_{\M}]\times D_x\to\R^+$, $j=1,\,\dots,\,n$, such that $f_j(t,\,x)=f(t,\,x,\,v_j)$ or, in distributional sense, $f(t,\,x,\,v)=\sum_{j=1}^{n} f_j(t,\,x)\delta_{v_j}(v)$. Particularly, $f_j(t,\,x) dx$ is the (infinitesimal) number of vehicles traveling at speed $v_j$, that, at time $t$, occupy a position comprised between $x$ and $x+dx$. In this way, the granular character of the car flow is at least partially taken into account from the point of view of the speed distribution.

Starting from the discrete velocity kinetic framework developed in the above-cited papers, in the present work we undertake the discretization also of the microscopic space variable, in order to accomplish the program of a fully-discrete-state kinetic theory of vehicular traffic. The basic idea is to partition the spatial domain $D_x$ in disjoint cells (subintervals) $I_i$ such that $D_x =\cup_i I_i$. Each cell identifies a portion of the road in which only the number of vehicles is known. No matter how rough this assumption may seem, it is nevertheless realistic in view of the spatial granularity of traffic, which does not allow one to make finer predictions on the spatial position of vehicles within each cell. The microscopic state $(x,v)$ of the vehicles belongs then to the discrete state space $\{I_i\}_i\times\{v_j\}_j$, which generates a new structure of the kinetic equations not immediately deducible from more standard frameworks. Particularly, it is worth anticipating that the resulting mathematical structure is not a cellular automaton \cite{daganzo2006tfc,deutsch2005cam} in spite of the discreteness of the space state. In fact, as it will be clear in the next sections, vehicles are not assimilated to fictive particles jumping from their current site to a neighboring one with prescribed probability. They actually flow through the cells with their true speed, according to a transport term duly implemented in the time-evolution equations of their distribution functions. Car-to-car interactions, responsible for speed variations, are in turn coded in the aforesaid equations, so that finally the evolution of the system is not seen as a stepwise algorithmic update of the lattice of microscopic states.

In developing the program just outlined, this paper aims at covering the whole path from modeling to numerical simulations to well-posedness analysis of the resulting mathematical problems. In more detail, Section~\ref{sect:kin.fram} derives the general discrete-state equations, resting on the methods of the generalized kinetic theory \cite[Sect.~6]{bellomo2011mtc}, to be used as a reference framework for specific models. This framework accounts, in particular, for stochastic dynamics of speed variations, according to the idea that one-to-one microscopic interactions among vehicles are assimilated to stochastic games. Section~\ref{sect:model} presents a model obtained from the previous framework, which is then used in Section~\ref{sect:comp.anal} for producing exploratory numerical simulations of some case studies. These include the long-term trend of the system, particularly with reference to the fundamental diagrams predicted by the kinetic equations, and the dynamics of queue formation/depletion along a road triggered by two different causes: roadworks and traffic light. Finally, Section~\ref{sect:qual.anal} is devoted to the analytical study of some relevant qualitative properties of the initial/boundary-value problems. Under quite general assumptions, to be possibly regarded as modeling guidelines, global existence of the solutions, uniqueness, continuous dependence on the initial and boundary data, non-negativity, boundedness are proved, which provides the modeling framework with the necessary amount of mathematical robustness.

\section{Discrete space kinetic framework}
\label{sect:kin.fram}
In this section, starting from the discrete velocity setting recalled in the Introduction, we derive kinetic structures based on a fully discrete space of the mechanical microscopic states of the vehicles.

As previously anticipated, partitioning the spatial domain $D_x$ in a number of cells of finite size is a more realistic way of detecting the positions of the vehicles along the road. Indeed, it is consistent with the intrinsic granularity of the flow, which does not allow for a statistical description of their spatial distribution more accurate than a certain minimum level of detail. In addition, as we will see, it enables one to account easily for some effects due to the finite size of the vehicles even in a context in which the actual representation is not focused on each of them. To this purpose, a useful partition of the spatial domain $D_x$ is in pairwise disjoint cells $I_i$, whose union is $D_x$. For a road $D_x=[0,\,L]$ of length $L>0$, we set
$$ D_x=\bigcup_{i=1}^{m}I_i, \qquad I_{i_1}\cap I_{i_2}=\emptyset, \quad \forall\,i_1\ne i_2, $$
$m\in\mathbb{N}$ being the total number of cells needed for covering $D_x$, which depends on the size $\ell_i$ of each $I_i$. We will henceforth assume that cells have a constant size $\ell$, chosen in a such a way that $L/\ell\in\mathbb{N}$, thus $m=L/\ell$.

Let $f_{ij}=f_{ij}(t)$ be the distribution function of vehicles that, at time $t$, are located in the $i$-th cell with a speed in the $j$-th class. Then the total number $N_{ij}$ of vehicles in $I_i$ with speed $v_j$ is $N_{ij}=f_{ij}\ell$, while the total number $N_i$ of vehicles in $I_i$ is
$$ N_i=\sum_{j=1}^{n}N_{ij}=\ell\sum_{j=1}^{n}f_{ij}. $$
Notice that $\rho_i:=N_i/\ell$ is instead the local density of vehicles in the $i$-th cell. By further summing over $i$, one gets the total number $N$ of vehicles along the road:
\begin{equation}
	N=\sum_{i=1}^{m}N_i=\ell\sum_{i=1}^{m}\sum_{j=1}^{n}f_{ij}.
	\label{eq:N}
\end{equation}

\begin{remark}
If $N_{\M}>0$ is the maximum number of vehicles allowed in a single cell, then the maximum local density is $\rho_{\M}:=N_{\M}/\ell$, which defines the \emph{road capacity}.
\end{remark}

From the $f_{ij}$'s, the kinetic distribution function $f$ can be recovered as
\begin{equation}
	f(t,\,x,\,v)=\sum_{i=1}^{m}\sum_{j=1}^{n}f_{ij}(t)\1_{I_i}(x)\delta_{v_j}(v),
	\label{eq:f}
\end{equation}
$\1_{I_i}$ being the characteristic function of the cell $I_i$ ($\1_{I_i}(x)=1$ if $x\in I_i$, $\1_{I_i}(x)=0$ if $x\not\in I_i$). In practice, $f$ is an atomic distribution with respect to the variable $v$, like in the discrete velocity framework, and is piecewise constant with respect to the variable $x$. Particularly, this latter characteristic implies that vehicles are thought of as uniformly distributed within each cell.

Usual macroscopic variables of traffic, such as the vehicle density $\rho$, flux $q$, and average speed $u$ are obtained from Eq.~\eqref{eq:f} as distributional moments of $f$ with respect to $v$:
\begin{equation}
	\rho(t,\,x)=\sum_{i=1}^{m}\left(\sum_{j=1}^{n}f_{ij}(t)\right)\1_{I_i}(x), \quad
		q(t,\,x)=\sum_{i=1}^{m}\left(\sum_{j=1}^{n}v_j f_{ij}(t)\right)\1_{I_i}(x), \quad
			u(t,\,x)=\frac{q(t,\,x)}{\rho(t,\,x)}.
	\label{eq:macro.var}
\end{equation}
Notice that $\rho$ can also be written as $\rho(t,\,x)=\sum_{i=1}^m\rho_i(t)\1_{I_i}(x)$, thus showing that the vehicle density on the whole road is a piecewise constant function in $x$ built from the local cell densities. Analogously, after defining the flux at the $i$-th cell as $q_i:=\sum_{j=1}^{n}v_jf_{ij}$, it results $q(t,x)=\sum_{i=1}^{m}q_i(t)\1_{I_i}(x)$.

\subsection{Evolution equations for the $\boldsymbol{f_{ij}}$'s}
A mathematical model  of vehicular traffic in a fully discrete kinetic context is obtained by deriving suitable evolution equations for the distribution functions $f_{ij}$. To this end, we propose a method grounded on classical ideas of conservation laws, which however  exploits directly the discrete space structure as a distinctive feature of the present setting.

Fix a cell $I_i$ and a velocity class $j$. The basic idea is, as usual, that during a time span $\Dt>0$ the number $N_{ij}$ of vehicles in that cell with that velocity may vary because of:
\begin{itemize}
\item vehicles leaving the cell $I_i$ or new vehicles entering it at speed $v_j$ from neighboring cells;
\item interactions among vehicles within the cell $I_i$ inducing speed changes.
\end{itemize}
On the basis of these two principles, the following preliminary balance can be written:
\begin{equation}
	N_{ij}(t+\Dt)-N_{ij}(t)=-N_{ij}^\Out([t,\,t+\Dt])+N_{ij}^\In([t,\,t+\Dt])+J_{ij}(t)\ell\Dt+ o(\Dt)
	\label{eq:balance.N}
\end{equation}
with obvious meaning of the symbols $N_{ij}^\In$, $N_{ij}^\Out$. The term $J_{ij}(t)$ at the right-hand side accounts for instantaneous acceleration/braking dynamics per unit space $\ell$, that produce transitions among the various speed classes. For the moment we keep it generic, deferring to the next section a detailed discussion on its structure. Moreover, the term $o(\Delta t)$ indicates that we are neglecting higher order effects which may contribute to the variation of $N_{ij}$ during the time $\Dt$.

If the time span $\Dt$ is sufficiently small, we can assume that the number of vehicles leaving the $i$-th cell depends solely on the number of vehicles already present in $I_i$ at time $t$. To be definite, let $v_j\Dt\leq\ell$ for all $j$, so that no vehicle can move forward  for more than one cell in a single $\Dt$. Then the following form of $N_{ij}^\Out$ can be proposed:
$$ N_{ij}^\Out([t,\,t+\Dt])=\frac{v_j\Dt}{\ell}\Phi_{i,i+1}N_{ij}(t)+o(\Delta{t}), $$
where:
\begin{itemize}
\item the ratio $v_j\Dt/\ell$ accounts for the percent amount of vehicles traveling at speed $v_j$, which are sufficiently close to the boundary between the $i$-th cell (whence they are coming) and the $(i+1)$-th cell (where they are going) for changing cell during the time span $\Dt$;
\item the term $\Phi_{i,i+1}$, that we name \emph{flux limiter} across the interface separating the cells $I_i$, $I_{i+1}$, accounts for the percent amount of vehicles in $I_i$ that can actually move to $I_{i+1}$ on the basis of the free space available in the latter.
\end{itemize}
Likewise, the following form of $N_{ij}^\In$ can be proposed:
$$ N_{ij}^\In([t,\,t+\Dt])=\frac{v_j\Dt}{\ell}\Phi_{i-1,i}N_{i-1,j}(t)+o(\Delta{t}), $$
where now the space dynamics involve the $(i-1)$-th and the $i$-th cells.

Plugging these expressions in Eq.~\eqref{eq:balance.N} and rewriting in terms of the $f_{ij}$'s yields
\begin{equation}
	f_{ij}(t+\Dt)\ell-f_{ij}(t)\ell=-\frac{v_j\Dt}{\ell}\left(\Phi_{i,i+1}f_{ij}(t)\ell-\Phi_{i-1,i}f_{i-1,j}(t)\ell\right)
		+J_{ij}(t)\ell\Dt+o(\Dt)
	\label{dis.eq.balance}
\end{equation}
whence, dividing by $\ell\Dt$, taking the limit $\Dt\to 0^+$, and rearranging the terms produces
\begin{equation}
	\frac{df_{ij}}{dt}+\frac{v_j}{\ell} (\Phi_{i,i+1}f_{ij}-\Phi_{i-1,i}f_{i-1,j})=J_{ij}.
	\label{eq:balance.fij}
\end{equation}

This equation is in principle well defined only in the internal cells of the domain $D_x$, namely for $2 \leq i \leq m-1$. Indeed, at the boundary cells $i=1$, $i=m$ the flux limiters $\Phi_{0,1}$, $\Phi_{m,m+1}$ call for the two nonexistent external cells $I_0$, $I_{m+1}$. In addition, for $i=1$ the equation requires also the value $f_{0j}$. In order to overcome these difficulties, we consider that the condition $v_j \geq 0$ for all $j$ implies a unidirectional (rightward) flux of vehicles, which requires thus a condition on the left boundary of the domain. Consequently, the values $\Phi_{0,1}$ and $f_{0j}$ have to be provided as boundary conditions. On the contrary, the flux limiter $\Phi_{m,m+1}$ has to be prescribed at the right boundary in order to specify how the natural outflow of the vehicles is possibly modified by external conditions. In standard situations, it will be simply $\Phi_{m,m+1}=1$.

\begin{remark}
The formal derivation of Eq.~\eqref{eq:balance.fij} closely reminds of the standard one of conservation laws: the time variation of the conserved quantity is related to some drift plus microscopic conservative interactions typical of kinetic theory. Nevertheless, it is worth pointing out that classical procedures, such as the straightforward use of the divergence theorem, for deducing conservation laws in differential form cannot be applied in the present context. Indeed, the finite size of the space cells is a structural feature of the problem at hand, which prevents one from taking the limit $\ell\to 0^+$.
\end{remark}

\begin{remark}[Conservativeness of Eq.~\eqref{eq:balance.fij}]
The total number $N$ of vehicles along the road, defined by Eq.~\eqref{eq:N}, has to balance with the inflow/outflow of vehicles from the boundaries of $D_x$. We can formally check this property out of Eq.~\eqref{eq:balance.fij}, after recalling that the conservative interaction operator $J_{ij}$ is required to fulfill
\begin{equation}
	\sum_{j=1}^{n}J_{ij}(t)=0, \quad \forall\,i=1,\dots,m \  \,, t>0.
	\label{eq:cons.J}
\end{equation}
Multiplying Eq.~\eqref{eq:balance.fij} by $\ell$ and summing both sides over $i$, $j$ gives then
$$ \frac{dN}{dt}+\sum_{j=1}^{n}v_j\left[\sum_{i=1}^{m}\left(\Phi_{i,i+1}f_{ij}
	-\Phi_{i-1,i}f_{i-1,j}\right)\right]=0, $$
whence, summing telescopically in the square brackets, yields the desired result:
\begin{align*}
\frac{dN}{dt} &= -\Phi_{m,m+1}\sum_{j=1}^{n}v_jf_{mj}+\Phi_{0,1}\sum_{j=1}^{n}v_jf_{0j} \\
			  &= -\Phi_{m,m+1}q_m+\Phi_{0,1}q_0,
\end{align*}
$q_0$, $q_m$ being the incoming and outgoing fluxes of vehicles through the boundaries $x=0$ and $x=L$, respectively.
\end{remark}

\subsection{Stochastic dynamics of speed transitions}
The right-hand side of Eq.~\eqref{eq:balance.fij} is the phenomenological core of the kinetic equations, because it accounts for the time variation of the conserved quantities $f_{ij}$ in terms of microscopic interactions among vehicles. The operator $J_{ij}$ is called \emph{interaction operator}. It describes modifications of the speed $v_j$ of a vehicle traveling in the cell $I_i$ due to the action exerted on it by other vehicles.

Vehicles are not purely mechanical particles, for the presence of drivers provides them with the ability of taking decisions \emph{actively}, i.e., without necessarily the influence of external force fields. Consequently, the description delivered by $J_{ij}$ cannot be purely deterministic: personal behaviors have to be taken into account, which are suitably modeled from a stochastic point of view. Specifically, interactions among vehicles are assimilated to \emph{stochastic games} among couples of players. The game strategy of each player, viz. vehicle, is represented by its speed, while the payoff of the game is the new speed class it shifts to after the interaction. Games are stochastic because payoffs are known only in probability. In other words, the model considers that vehicles might not react always the same when placed in the same conditions due to possible partly subjective decisions of the drivers.

These ideas are brought to a formal level by writing $J_{ij}$ as a balance, in the space of microscopic states $\{I_i\}_{i=1}^{m}\times\{v_j\}_{j=1}^{n}$, of vehicles gaining and losing the state $(I_i,\,v_j)$ in the unit time:
\begin{equation}
	J_{ij}=\frac{\ell}{2}\left(\sum_{h,\,k=1}^{n}\eta_{hk}(i)A_{hk}^{j}(i)f_{ih}f_{ik}-
		f_{ij}\sum_{k=1}^{n}\eta_{jk}(i)f_{ik}\right).
	\label{eq:J}
\end{equation}
This expression is formally obtained by integrating, over a single space cell $I_i$, the interaction operator deduced in previous works dealing with the discrete-velocity kinetic theory of vehicular traffic (see e.g., \cite{tosin2009fgk}). In this sense, Eq.~\eqref{eq:J} is consistent with other models relying on similar theoretical backgrounds.

We mention that:
\begin{itemize}
\item Only binary interactions among vehicles are accounted for, which possibly produce a change of speed but not of position of the interacting pairs. In particular, it is customary to term \emph{candidate} the vehicle which is likely to modify its current speed $v_h$ into the \emph{test} speed $v_j$ (the payoff of the game) after an interaction with a \emph{field} vehicle with speed $v_k$.
\item The quantitative description of the interactions is provided by the terms $\boldsymbol{\eta}=\{\eta_{hk}(i)\}_{h,\,k=1,\,\dots,\,n}^{i=1,\,\dots,\,m}$, $\boldsymbol{A}=\{A_{hk}^j(i)\}_{h,\,j,\,k=1,\,\dots,\,n}^{i=1,\,\dots,\,m}$ called the \emph{interaction rate} and the \emph{table of games}, respectively. The former models the frequency of the interactions among candidate and field vehicles. The latter gives instead the probabilities that candidate vehicles get the test speeds after interacting with field vehicles (hence $A_{hk}^j$ is, for $j=1,\,\dots,\,n$, the probability distribution of the candidate player's payoff conditioned to the strategies $v_h$, $v_k$ by which the two involved players approach the game). Notice that both $\eta_{hk}(i)$ and $A_{hk}^{j}(i)$ may vary from cell to cell. Since the table of games is a discrete probability distribution, the following conditions must hold:
\begin{align}
	0\leq A_{hk}^{j}(i)\leq 1, \qquad & \forall\,h,\,j,\,k=1,\,\dots,\,n, \quad \forall\,i=1,\,\dots,\,m \nonumber \\
	\sum_{j=1}^{n}A_{hk}^{j}(i)=1, \qquad & \forall\,h,\,k=1,\,\dots,\,n, \quad \forall\,i=1,\,\dots,\,m. \label{eq:tog.sum.1}
\end{align}
Particularly, from Eq.~\eqref{eq:tog.sum.1} it follows the fulfillment of Eq.~\eqref{eq:cons.J}.
\item The coefficient $\ell/2$ appearing in expression \eqref{eq:J} weights the above interactions over the characteristic length of a single space cell.
\end{itemize}

\begin{remark}[Generalization to nonlocal interactions]
The form \eqref{eq:J} of the interaction operator tacitly implies that candidate and field vehicles stay within the same space cell. However, one can also assume that the candidate vehicle interacts with field vehicles located in further cells ahead. This amounts to introducing a set of \emph{interaction cells}
$$ \{I_i,\,I_{i+1},\,\dots,\,I_{i+\mu(i)}\} $$
extending from the $i$-th cell up to $\mu(i)$ cells in front. The number $\mu(i)$ varies in general with the index $i$ so as to satisfy the constraint $i+\mu(i)\leq m$ (i.e., interaction cells cannot lie beyond the last cell $I_m$). Consequently, the structure \eqref{eq:J} of $J_{ij}$ modifies as
\begin{equation}
	J_{ij}=\frac{\ell}{2}\left(\sum_{l=i}^{i+\mu(i)}\sum_{h,\,k=1}^{n}\eta_{hk}(l)
		A_{hk}^{j}(l)f_{ih}f_{lk}w_l-f_{ij}\sum_{l=i}^{i+\mu(i)}\sum_{k=1}^{n}
			\eta_{jk}(l)f_{lk}w_l\right),
	\label{eq:J.nonlocal}
\end{equation}
where $\{w_l\}_{l=i}^{i+\mu(i)}$ are suitable weights accounting for different effects of the interactions based on the distance of the $l$-th cell from the $i$-th one. They have to comply with some minimal requirements:
\begin{equation}
	w_l\geq 0 \quad \forall\,l=i,\,\dots,\,i+\mu(i), \qquad	\sum_{l=i}^{i+\mu(i)}w_l=1 \quad \forall\,i=1,\,\dots,\,m.
	\label{eq:w.sum.1}
\end{equation}
Notice that, for $\mu\equiv 0$, Eq.~\eqref{eq:w.sum.1} implies that the only remaining weight $w_i$ is equal to $1$, which ultimately enables one to recover Eq.~\eqref{eq:J} as a special case of Eq.~\eqref{eq:J.nonlocal}.
\end{remark}

\subsection{Non-dimensionalization}
At this point, it is useful to rewrite Eq.~\eqref{eq:balance.fij} in dimensionless form. To this end, we choose $\ell$ and $V_\M$ as characteristic values of length and speed, respectively, and we define the following nondimensional variables and functions:
$$ v_j^\ast:=\frac{v_j}{V_\M}, \qquad t^\ast:=\frac{V_\M}{\ell}t, \qquad
	f_{ij}^\ast(t^\ast):=\frac{\ell}{N_\M}f_{ij}\left(\frac{\ell}{V_\M}t^\ast\right). $$
Substituting these expressions into Eqs.~\eqref{eq:balance.fij}, \eqref{eq:J}, we get, after some algebraic manipulations,
\begin{equation}
	\frac{df_{ij}^\ast}{dt^\ast}+v_j^\ast\left(\Phi_{i,i+1}f_{ij}^\ast
		-\Phi_{i-1,i}f^\ast_{i-1,j}\right)=\sum_{h,\,k=1}^{n}\eta_{hk}^\ast(i)A_{hk}^{j}(i)
			f_{ih}^\ast f_{ik}^\ast-f_{ij}^\ast\sum_{k=1}^{n}\eta_{jk}^\ast(i)f_{ik}^\ast,
	\label{eq:balance.fij.complete}
\end{equation}
where
$$ \eta_{hk}^\ast(i):=\frac{\ell N_\M}{2V_\M}\eta_{hk}(i) $$
is the dimensionless interaction rate.

It is worth noting that the new speed lattice is now
$$ v_1^\ast=0 < \dots < v_i^\ast < v_{i+1}^\ast < \dots < v_{n}^\ast=1, $$
while the nondimensional length of the space cells becomes unitary. In addition, one can check that the dimensionless local cell density reads
\begin{equation}
	\rho_i^\ast(t^\ast):=\frac{\ell}{N_\M}\rho_i\left(\frac{\ell}{V_\M}t^\ast\right)
		=\frac{N_i\left(\frac{\ell}{V_\M}t^\ast\right)}{N_\M},
	\label{eq:rho.nondim}
\end{equation}
whence $0\leq\rho_i^\ast\leq 1$. Consequently, $\rho_i^\ast$ can be directly thought of as a probability density.

We will henceforth refer to Eq.~\eqref{eq:balance.fij.complete}, omitting the asterisks on the dimensionless quantities for brevity.

\section{From the general framework to particular models}
\label{sect:model}
Starting from the mathematical structures given in Sect.~\ref{sect:kin.fram}, specific models can be produced by detailing the flux limiters, the interaction rate, and the table of games. This amounts to analyzing the dynamics of microscopic interactions occurring among vehicles.

\paragraph*{Flux limiters} We recall that the term $\Phi_{i,i+1}$ limits the number of vehicles that can actually travel across the cells on the basis of the occupancy of the destination cells.

A prototypical form of such a limiter can be derived via the following argument. The free space left in the cell $I_{i+1}$ for vehicles coming from the cell $I_i$ is $N_{\M}-N_{i+1}$. Assuming $N_i>N_{\M}-N_{i+1}$, i.e., that the number of vehicles in the $i$-th cell is greater than such an available space, the percent amount of vehicles that can actually move is
$$ \Phi_{i,i+1}=\frac{N_{\M}-N_{i+1}}{N_i}. $$
On the other hand, if $N_i \leq N_{\M}-N_{i+1}$ then in the cell $I_{i+1}$ there is, in principle, enough room for all vehicles coming from the cell $I_i$, thus in this case $\Phi_{i,i+1}=1$.

Rewriting the flux limiter in terms of the local cell density via Eq.~\eqref{eq:rho.nondim}, it results
\begin{equation}
	\Phi_{i,i+1}=
	  \begin{cases}
    	\dfrac{1-\rho_{i+1}}{\rho_i} & \text{if\ } \rho_i+\rho_{i+1}>1 \\
	    1 & \text{if\ } \rho_i+\rho_{i+1}\leq 1.
	  \end{cases}
  \label{eq:flux.lim}
\end{equation}
The interpretation is that if the total density in the pair of cells $(I_i,\,I_{i+1})$ is greater than the (dimensionless) road capacity then the microscopic granularity of traffic starts acting and all vehicles cannot freely flow through the cells. Otherwise, the flux limiter has actually no effect.

\paragraph*{Interaction rate} The frequency of the interactions can be assumed to depend on the number of vehicles in each cell, in such a way that the larger this number the higher the frequency. In view of the indistinguishability of the vehicles, a possible simple form implementing this idea is
\begin{equation}
	\eta_{hk}(i)=\eta_0\frac{N_i}{N_\M}=\eta_0\rho_i,
	\label{eq:eta}
\end{equation}
where $\eta_0>0$ is a basic interaction frequency. Notice that this specific form does not depend explicitly on the speed classes of the interacting pairs. From now on we will invariably consider interaction rates of this kind and we will use accordingly the simplified notation $\eta(i)$.

\paragraph*{Table of games} Speed transitions within the space cells depend, in general, on the evolving traffic conditions, namely, among others, the local car congestion and the quality of the environment (e.g., road or weather). In this paper, we account for the former via the concept of \emph{fictitious density} explained below, which aims at introducing in the table of games the \emph{active} ability of drivers to anticipate the actual evolution of traffic. Conversely, we identify the latter simply by a parameter $\alpha\in [0,\,1]$, $\alpha=0$ standing for the worst environmental conditions and $\alpha=1$ for the best ones.

The concept of fictitious density, originally introduced in \cite{deangelis1999nhm}, is here revisited in the frame of the discrete space structure and of the stochastic game approach to vehicle interactions. We define the fictitious car density $\fictrho_i$ felt by drivers in the cell $I_i$ as the following weighted average of the actual car densities in the cells $I_i$, $I_{i+1}$:
\begin{equation}
	\fictrho_i:=
		\begin{cases}
			(1-\beta)\rho_i+\beta\rho_{i+1} & \text{for\ } i=1,\,\dots,\,m-1 \\
			\rho_m & \text{for\ } i=m,
		\end{cases}
	\label{eq:fictrho}
\end{equation}
where $\beta\in [0,\,1]$ is a parameter related to the anticipation ability of drivers. From Eq.~\eqref{eq:fictrho} it results that if $\rho_{i+1}\leq\rho_i$ then $\fictrho_i\leq\rho_i$, thus if the car density in the cell ahead is lower than the one in the current cell drivers may be motivated to behave as if the traffic congestion in their cell were lower than the actual one. This implies, for instance, a higher inclination to maintain the speed, or even to accelerate, in spite of a possibly insufficient local free space, because they are anticipating the nearby evolution of traffic. Conversely, if $\rho_{i+1}>\rho_i$ then it results $\fictrho_i>\rho_i$, hence if the car density in the cell ahead is higher than the one in the current cell drivers may be motivated to keep a more precautionary behavior, for instance by decelerating in spite of the real local free space. Notice that $\fictrho_i=\rho_i$ if and only if $\beta=0$ or $\rho_{i+1}=\rho_i$. In these cases, driver's anticipation ability is countered by either a specifically modeled behavior or the instantaneous traffic evolution.

The table of games stemming from these ideas is described in the following. Three cases need to be dealt with, corresponding to candidate vehicles $h$ traveling more slowly, or faster, or at the same speed as field vehicles $k$.

\begin{itemize}
\item $v_h<v_k$. In this case, the candidate vehicle may either decide to maintain its speed $v_h$ or be motivated by the faster field vehicle to accelerate to the speed $v_{h+1}$. The better the environmental conditions ($\alpha$), and the more the free room felt by drivers in the $i$-th cell ($1-\fictrho_i$), the higher the probability to accelerate. Such actions are further modulated by the actual free room available in the destination cell $I_{i+1}$. This effect is accounted for by the flux limiter $\Phi_{i,i+1}$, specifically by the fact that the candidate vehicle can be forced to stop (corresponding to the speed transition $v_h\to v_1=0$) as the next cell fills. A technical modification is necessary at the boundary $h=1$ of the speed lattice, for if the candidate vehicle is already still then the options of either maintaining the speed or being forced to stop by the flux limiter coincide.
\begin{equation}
	v_h<v_k
	\begin{cases}
		h=1 &
			\begin{cases}
				A_{1k}^1(i)=1-\alpha(1-\fictrho_i)\Phi_{i,i+1} \\
				A_{1k}^2(i)=\alpha(1-\fictrho_i)\Phi_{i,i+1} \\
				A_{1k}^j(i)=0 \quad \text{otherwise}
			\end{cases}	\\
		h>1 &
			\begin{cases}
				A_{hk}^1(i)=1-\Phi_{i,i+1} \\
				A_{hk}^h(i)=[1-\alpha(1-\fictrho_i)]\Phi_{i,i+1} \\
				A_{hk}^{h+1}(i)=\alpha(1-\fictrho_i)\Phi_{i,i+1} \\
				A_{hk}^j(i)=0 \quad \text{otherwise}
			\end{cases}
	\end{cases}
	\label{eq:tab.games.h<k}
\end{equation}
\item $v_h>v_k$. In this case, the candidate vehicle may either maintain its speed $v_h$, if e.g., it can overtake the leading field vehicle, or be forced to queue by decelerating to the speed $v_k$ of the latter. Specifically, from the expression of the corresponding transition probability reported in Eq.~\eqref{eq:tab.games.h>k} below, it can be noticed that this latter event is fostered by bad environmental conditions as well as by a reduced free room felt by drivers in the $i$-th cell. In particular, if the candidate vehicle interacts with a stationary field vehicle ($k=1$), the transition probabilities are deduced by merging, like before, the first two cases of the expression valid for $k>1$.
\begin{equation}
	v_h>v_k
	\begin{cases}
		k=1 &
			\begin{cases}
				A_{h1}^1(i)=1-\alpha(1-\fictrho_i)\Phi_{i,i+1} \\
				A_{h1}^h(i)=\alpha(1-\fictrho_i)\Phi_{i,i+1} \\
				A_{h1}^j(i)=0 \quad \text{otherwise}
			\end{cases} \\
		k>1 &
			\begin{cases}
				A_{hk}^1(i)=1-\Phi_{i,i+1} \\
				A_{hk}^k(i)=[1-\alpha(1-\fictrho_i)]\Phi_{i,i+1} \\
				A_{hk}^h(i)=\alpha(1-\fictrho_i)\Phi_{i,i+1} \\
				A_{hk}^j(i)=0 \quad \text{otherwise}
			\end{cases}
	\end{cases}
	\label{eq:tab.games.h>k}
\end{equation}
\item $v_h=v_k$. Finally, in this case the candidate vehicle can either be motivated to accelerate to the speed $v_{h+1}$ or induced to decelerate to the speed $v_{h-1}$ by the field vehicle; alternatively, it may also decide to maintain its speed $v_h$. The corresponding transition probabilities are built from arguments analogous to those discussed in the previous points, thus invoking again the concepts of environmental conditions, free room felt by drivers in the $i$-th cell, and modulation effect by the flux limiter. At the boundaries of the speed lattice some technical modifications are needed, for some output events may coincide while others may not apply (e.g., deceleration and acceleration are not possible in the lowest and highest speed classes, respectively).
\begin{equation}
	v_h=v_k
	\begin{cases}
		h=1 &
			\begin{cases}
				A_{11}^1(i)=1-\alpha(1-\fictrho_i)\Phi_{i,i+1} \\
				A_{11}^2(i)=\alpha(1-\fictrho_i)\Phi_{i,i+1} \\
				A_{11}^j(i)=0 \quad \text{otherwise}
			\end{cases} \\
		h=2 &
			\begin{cases}
				A_{22}^1(i)=1-\Phi_{i,i+1}+(1-\alpha)\fictrho_i\Phi_{i,i+1} \\
				A_{22}^2(i)=[1-\alpha-(1-2\alpha)\fictrho_i]\Phi_{i,i+1} \\
				A_{22}^3(i)=\alpha(1-\fictrho_i)\Phi_{i,i+1} \\
				A_{22}^j(i)=0 \quad \text{otherwise}
			\end{cases} \\
		2<h<n &
			\begin{cases}
				A_{hh}^1(i)=1-\Phi_{i,i+1} \\
				A_{hh}^{h-1}(i)=(1-\alpha)\fictrho_i\Phi_{i,i+1} \\
				A_{hh}^h(i)=[1-\alpha-(1-2\alpha)\fictrho_i]\Phi_{i,i+1} \\
				A_{hh}^{h+1}(i)=\alpha(1-\fictrho_i)\Phi_{i,i+1} \\
				A_{hh}^j(i)=0 \quad \text{otherwise}
			\end{cases} \\
		h=n &
			\begin{cases}
				A_{nn}^1(i)=1-\Phi_{i,i+1} \\
				A_{nn}^{n-1}(i)=(1-\alpha)\fictrho_i\Phi_{i,i+1} \\
				A_{nn}^n(i)=[1-(1-\alpha)\fictrho_i]\Phi_{i,i+1} \\
				A_{nn}^j(i)=0 \quad \text{otherwise}
			\end{cases}
	\end{cases}
	\label{eq:tab.games.h=k}
\end{equation}
\end{itemize}

\section{Computational analysis}
\label{sect:comp.anal}
The validity of the model presented in the previous sections, and indirectly also that of the methodological approach which generated it, can be assessed through exploratory numerical simulations addressing typical phenomena of vehicular traffic.

\subsection{The spatially homogeneous problem}
A first common case study refers to car flow in uniform space conditions, i.e., when one assumes that cars are homogeneously distributed along the road. As a matter of fact, this ideal scenario is the tacit assumption behind the empirical study of the so-called \emph{fundamental diagrams} out of experimental data measured for real flows. Such diagrams relate the density of cars to either their average speed or their flux, this way providing synthetic insights into the gross phenomenology of car flow expected in stationary conditions. The representation at the mesoscopic (kinetic) scale is particularly suited to simulate, by mathematical models, this kind of information. In fact, the distribution function allows one to obtain genuinely the required statistics (cf. Eq.~\eqref{eq:macro.var}) from the modeled microscopic interactions, rather than relying on them for modeling directly the average macroscopic dynamics. Specifically, considering the basic assumption of spatial homogeneity, the equations to be solved are:
\begin{equation}
	\frac{df_j}{dt}=\eta\left(\sum_{h,\,k=1}^{n}A_{hk}^jf_hf_k-f_j\rho\right), \qquad j=1,\,\dots,\,n,
	\label{eq:spat.homog}
\end{equation}
which are derived from Eq.~\eqref{eq:balance.fij.complete} with $f_{ij}=f_{i+1,j}$ for $i=1,\,\dots,\,m-1$ and all $j$, along with the assumption that the interaction rate does not depend on $h$, $k$, cf. Eq.~\eqref{eq:eta}. In view of Eq.~\eqref{eq:flux.lim}, also the flux limiters turn out to be independent of $i$, which makes the transport term at the left-hand side of Eq.~\eqref{eq:balance.fij.complete} vanish. Ultimately, the index $i$ of the spatial cell has been removed because it is unnecessary in the present context.

Owing to Eq.~\eqref{eq:tog.sum.1}, Eq.~\eqref{eq:spat.homog} is such that the car density $\rho=\sum_{j=1}^{n}f_j$ is conserved in time:
\begin{equation*}
	\frac{d\rho}{dt}=\eta\left(\sum_{h,\,k=1}^{n}f_hf_k-\rho\sum_{j=1}^{n}f_j\right)=0,
\end{equation*}
therefore one can fix $\rho\in[0,\,1]$ at the initial time and then, by means of Eq.~\eqref{eq:spat.homog}, look for the asymptotic statistical distribution (if any) of microscopic speeds corresponding to the very same car density. Following this procedure, the fundamental diagrams of Fig.~\ref{fig:fund.diag} have been numerically obtained for various values of the parameter $\alpha$ representing the environmental conditions in the table of games (cf. Eqs.~\eqref{eq:tab.games.h<k}--\eqref{eq:tab.games.h=k}). It is worth noticing that the other parameters $\eta_0$ (cf. Eq.~\eqref{eq:eta}) and $\beta$ (cf. Eq.~\eqref{eq:fictrho}) play instead no role in shaping the asymptotic distributions, thus also the fundamental diagrams. Indeed, the first one can be hidden in the time scale whereas the second one drops because in spatially homogeneous conditions the fictitious car density coincides with the actual one.

\begin{figure}[!t]
\centering
\includegraphics[width=0.95\textwidth]{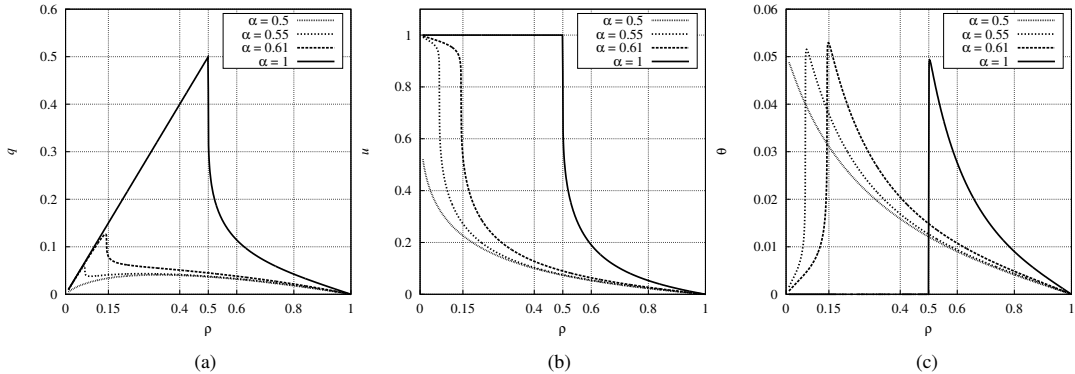}
\caption{From left to right, fundamental diagrams of the average flux, the average speed, and the speed variance computed from the stationary solutions of the spatially homogeneous model~\eqref{eq:spat.homog} for various values of the parameter $\alpha$ indicated in the key.}
\label{fig:fund.diag}
\end{figure}

Diagrams in Figs.~\ref{fig:fund.diag}a, \ref{fig:fund.diag}b have been computed by duly adapting to the present case the formulas reported in Eq.~\eqref{eq:macro.var}. They demonstrate that the model is able to catch qualitatively the well-known \emph{phase transition} from free to congested traffic flow \cite{Kerner2004pot}. The trend of the average flux $q$ (Fig.~\ref{fig:fund.diag}a) is basically linear for low density, then it becomes markedly nonlinear when $\rho$ increases, i.e., the road progressively clogs up. Notice that for $\alpha=1$, i.e., in the ideal situation of optimal environmental conditions, the average flux is exactly linear in the free flow phase, which persists until $\rho=0.5$. The average speed $u$ (Fig.~\ref{fig:fund.diag}b) is initially almost constant and close to the maximum possible value $u=1$ (actually, for $\alpha=1$ the model predicts that it is exactly constant to $1$ until $\rho=0.5$), then it drops steeply to zero when the density enters the congested flow range. The free flow phase reduces as the environmental conditions worsen. For $\alpha$ decreasing from $1$ to $0$, it eventually disappears completely (cf. the curves for $\alpha=0.5$).

The curves in Fig.~\ref{fig:fund.diag}c show the asymptotic trend of the speed variance $\theta$, defined as
\begin{equation*}
	\theta:=\frac{1}{\rho}\sum_{j=1}^{n}{(v_j-u)}^2f_j,
\end{equation*}
vs. the car density $\rho$. They can provide useful quantitative information about the aforesaid phase transition, considering that they clearly show that there is a value of $\rho$ (depending on $\alpha$) for which $\theta$ is maximum. By understanding such a maximum as the signal of a strong change in the microscopic behavior of vehicles with respect to the estimated average one, it is possible to interpret the corresponding density as the \emph{critical density} $\rho_c$ separating the free and congested regimes. For instance, experimental measurements on the Venice-Mestre highway \cite{bonzani2003eht} indicate that a realistic range of the critical density is $\rho_c\in[0,\,0.15]$. According to the presented simulations, the corresponding realistic range of the environmental parameter is $\alpha\in[0,\,0.61]$. Of course, different roads can produce different empirical evaluations of $\rho_c$, however the tuning procedure of $\alpha$ based on the fundamental diagram of $\theta$ remains conceptually the same.

A couple of comments are in order:
\begin{itemize}
\item the simulations based on Eq.~\eqref{eq:spat.homog} indicate that for $\alpha\leq 0.5$ the speed variance $\theta$ attains its maximum always at $\rho=0$, i.e., the free flow phase is absent. Therefore the trend of the variance does not provide enough information for identifying univocally the value of $\alpha$ corresponding to a null critical density, this particular case requiring presumably a more accurate investigation;
\item the same simulations indicate also that in no case the critical density is expected to be greater than $\rho_c=0.5$ (which is the value found for $\alpha=1$). This can be explained considering that for $\rho>0.5$ cars are, in average, closer to one another than their characteristic size (i.e., the average distance between two successive cars is lower than the typical car size), which might prevent any of them from being unperturbed by cars ahead.
\end{itemize}

\begin{table}[!t]
\centering
\begin{tabular}{lcl}
Parameter & Value & Description \\
\hline
$m$ & $10$ & Number of space cells \\
$n$ & $6$ & Number of speed classes \\
$\eta_0$ & $1$ & Coefficient of the interaction rate, see Eq.~\eqref{eq:eta} \\
$\alpha$ & $0.61$ & Constant environmental conditions \\
$\alpha_i$ & {\scriptsize $\begin{cases} 0.61 & \text{if\ } i\leq 5 \\ \frac{1}{40}\left(31-\frac{i}{10}\right) & \text{if\ } 6\leq i\leq 9 \\ 0.5 & \text{if\ } i=10\end{cases}$}
	& Cell-to-cell variable environmental conditions \\
$\beta$ & $0$ & Coefficient of the fictitious density, see Eq.~\eqref{eq:fictrho} \\
\hline
\end{tabular}
\caption{Model parameters for the roadworks problem}
\label{tab:roadworks}
\end{table}

\begin{table}[!t]
\centering
\begin{tabular}{lcl}
Parameter & Value & Description \\
\hline
$m$ & $10$ & Number of space cells \\
$n$ & $6$ & Number of speed classes \\
$\eta_0$ & $1$ & Coefficient of the interaction rate, see Eq.~\eqref{eq:eta} \\
$\alpha$ & $0.55$ & Environmental conditions \\
$\beta$ & $1$ & Coefficient of the fictitious density, see Eq.~\eqref{eq:fictrho} \\
\hline
\end{tabular}
\caption{Model parameters for the traffic light problem}
\label{tab:traffic.light}
\end{table}

\subsection{The spatially inhomogeneous problem}
A second case study concerns the evolution in time of the car density along a road. Nonuniform space conditions are now explicitly considered, the interest being also in the transient trend of the traffic flow. A qualitative validation of the model can be performed upon assessing its ability to reproduce spatial flow patterns observed in real conditions. Here we focus specifically on the formation of queues, presenting two prototypical scenarios which may be of practical interest for traffic management. All relevant parameters used in the next simulations are summarized in Tables~\ref{tab:roadworks}, \ref{tab:traffic.light}.

The first scenario refers to queue formation caused by a worsening of the driving conditions, for instance a narrowing of the roadway due to roadworks. Using the full Eq.~\eqref{eq:balance.fij.complete}, this can be simulated by assuming that the parameter $\alpha$ in the table of games varies from cell to cell, $\alpha=\alpha_i$, $i=1,\,\dots,\,m$. In particular, inspired by the reasonings proposed for the spatially homogeneous problem, we consider a gradual decrease of $\alpha$ from $0.61$ at the beginning of the road to $0.5$ at its end. In the case of the already mentioned Venice-Mestre highway, this corresponds to a change in the expected flow conditions from $\rho_c=0.15$ to $\rho_c=0$. Figure~\ref{fig:roadworks} clearly shows the formation and backward propagation of a queue for such a variable $\alpha$ with respect to the basically unperturbed flow of vehicles produced when $\alpha$ is kept constant to its best possible value ($\alpha=0.61$) all along the road.

\begin{figure}[!t]
\centering
\includegraphics[width=0.95\textwidth]{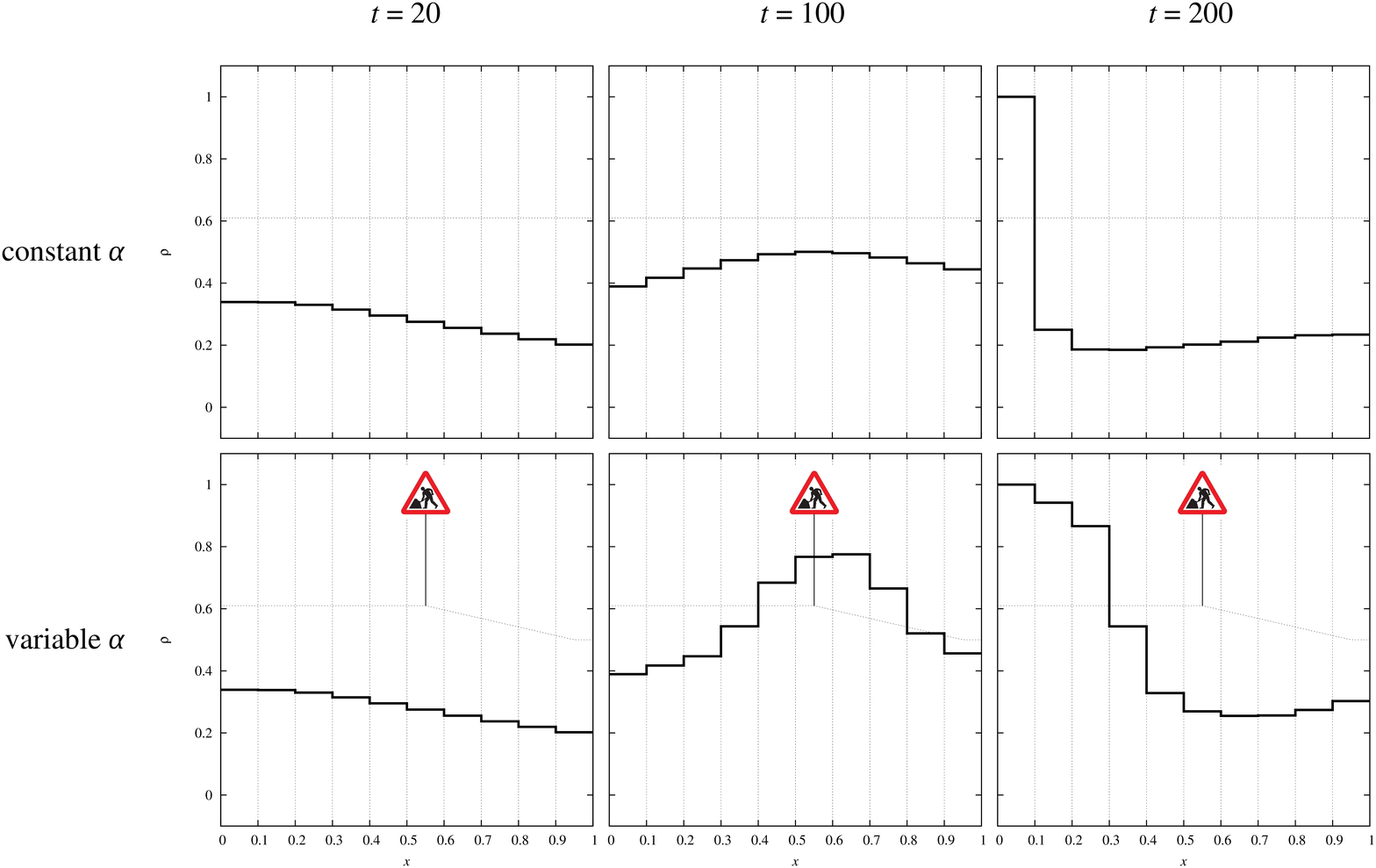}
\caption{Comparison, at the same computational times, of the traffic flow under constant and variable road conditions due to roadworks. The road is initially empty, while a constant density $\rho_0$ of incoming vehicles, uniformly distributed over the speed classes, is imposed at the left boundary at all times. The flux limiters $\Phi_{0,1}$, $\Phi_{10,11}$ at the left and right boundaries are set to $\frac{1-\rho_1}{\rho_0}$ and to $1$, respectively. The dotted line spanning the space cells is the graph of the function $\alpha=\alpha(x)$ used in these simulations.}
\label{fig:roadworks}
\end{figure}
\begin{figure}[!t]
\centering
\includegraphics[width=0.95\textwidth]{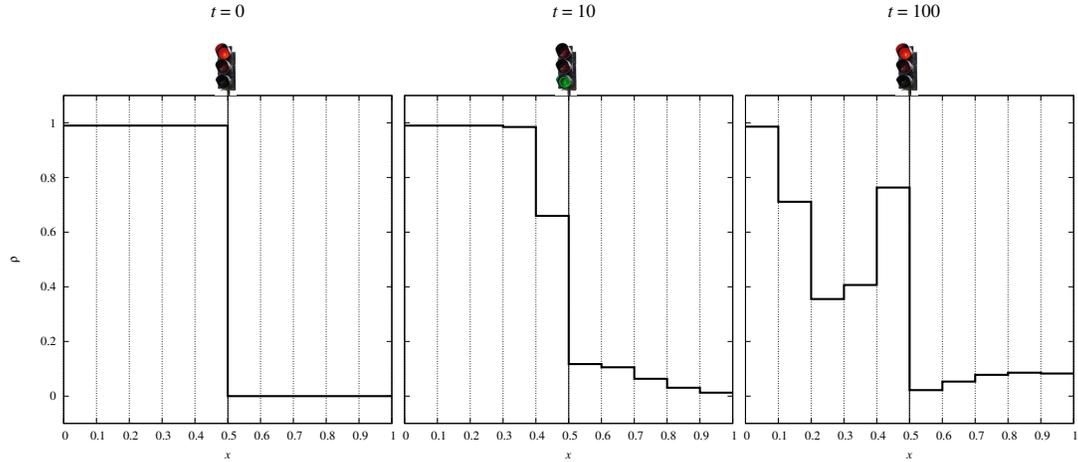}
\caption{Evolution of a queue during several red-green cycles of a traffic light placed at $x=0.5$. No cars enter the domain from the left boundary for all $t>0$. The initial condition is the pre-formed queue illustrated in the left snapshot. The traffic light period is of $20$ computational time units.}
\label{fig:traffic.light}
\end{figure}

The second scenario refers instead to the alternate formation and depletion of a queue behind a traffic light, which, as Fig.~\ref{fig:traffic.light} demonstrates, are well reproduced by the model. In more detail, the action of the traffic light on the flow of vehicles is easily simulated, in our framework, by identifying the traffic light with one of the flux limiters ($\Phi_{5,6}$ in the case of Fig.~\ref{fig:traffic.light}), to which a periodic time evolution is assigned mimicking the successive red-green cycles of the light. Specifically, during the red phase the selected limiter is set invariably to zero, since no car is allowed to flow beyond the traffic light regardless of the actual free space in the cell ahead. Conversely, during the green phase the selected limiter switches to the expression given by Eq.~\eqref{eq:flux.lim}, hence the flow of cars is again possible according to the usual rules. The initial condition for this problem is, as shown by the first snapshot of Fig.~\ref{fig:traffic.light}, a fully formed queue (i.e., $\rho_i=1$) standing at a red light with empty road ahead. When, at $t>0$, the light turns to green, cars in the cell $I_5=[0.4,\,0.5]$ can start moving only if they are able to see the empty space in front of them, namely if they react to a fictitious density $\fictrho_5$ lower than the real one $\rho_5=1$. Indeed, by carefully inspecting the table of games reported in Eqs.~\eqref{eq:tab.games.h<k}--\eqref{eq:tab.games.h=k}, it results that standing vehicles (i.e., vehicles in the speed class $v_1=0$) in a full cell (i.e., one with $\rho_i=1$) have no chance to shift to the velocity class $v_2>0$ if the transition probabilities depend on the actual cell density. Therefore, the concept of fictitious density as introduced in Eq.~\eqref{eq:fictrho} is important for conferring realism to the model.

\section{Qualitative analysis}
\label{sect:qual.anal}
The aim of this section is to study the well-posedness of the following class of initial/boundary-value problems, generated by the mathematical framework presented in the previous sections:

\begin{equation}
\begin{cases}
	\dfrac{df_{ij}}{dt}+v_j\left(\Phi_{i,i+1}[\f]f_{ij}-\Phi_{i-1,i}[\f]f_{i-1,j}\right)
		=\eta(i)[\f]\left(\displaystyle{\sum_{h,\,k=1}^{n}}A_{hk}^{j}(i)[\f]f_{ih}f_{ik}
			-f_{ij}\rho_i\right) \\
	f_{ij}(0)=f^0_{ij} \\
	f_{0j}(t)=\bar{f}_j(t) \\
	\Phi_{0,1}=\bar{\Phi}(t),
\end{cases}
\label{IBVP}
\end{equation}
which we specifically consider in the mild formulation
\begin{equation}
	\begin{cases}
		\begin{aligned}
			f_{ij}(t)={f}^0_{ij}+\displaystyle{\int_0^t}\{ &v_j \left(\Phi_{i-1,i}[\f](s)f_{i-1,j}(s)
				-\Phi_{i,i+1}[\f](s)f_{ij}(s)\right) \\
			& -G_{ij}[\f,\,\f](s)+f_{ij}(s)L_{ij}[\f](s)\}\,ds
		\end{aligned} \\[5mm]
		f_{0j}(t)={\bar{f}}_j(t) \\
		\Phi_{0,1}=\bar{\Phi}(t)
	\end{cases}
	\label{IBVPmild}
\end{equation}
where
\begin{equation}
	G_{ij}[\f,\,\f]=\eta(i)[\f]\sum_{h,\,k=1}^{n}A_{hk}^{j}(i)[\f]f_{ih}f_{ik}, \qquad
		L_{ij}[\f]=\eta(i)[\f]\rho_i
	\label{dis_op}
\end{equation}
are the so-called \emph{gain} and \emph{loss operator}, respectively (cf. Eq.~\eqref{eq:J}). Notice that unlike the previous sections here we have emphasized the dependence of the flux limiters, the interaction rate, and the table of games on the set of distribution functions $\f=\{f_{ij}\}_{i,j}$. This notation will be indeed useful in the next calculations.

We organize the qualitative analysis of Problem~\eqref{IBVP} in two parts: first we prove the uniqueness of the solution and its continuous dependence on the initial and boundary data; next we state existence. As a preliminary to this, we introduce the functional framework in which we will consider Problem~\eqref{IBVP} and the assumptions needed for achieving the proofs.

We denote by $X_{T_\M}=C([0,\,T_\M];\,\R^{mn})$ the Banach space of vector-valued continuous functions $\u=\u(t):[0,\,T_\M]\to\R^{mn}$, $\u(t)=(u_{11}(t),\,\dots,\,u_{1n}(t),\,\dots,\,u_{m1}(t),\,\dots,\,u_{mn}(t))$, equipped with the uniform norm
$$ \norm{\u}_\infty:=\sup_{t\in[0,\,T_\M]}\norm{\u(t)}_1
	=\sup_{t\in[0,\,T_\M]}\sum_{i=1}^m\sum_{j=1}^n\abs{u_{ij}(t)}, $$
where $\norm{\cdot}_1$ is the classical 1-norm in $\R^{mn}$. We also introduce its subset
$$ \B=\left\{\u\in X_{T_\M}:\, 0\leq u_{ij}(t)\leq 1,\ 
	\sum_{j=1}^n u_{ij}(t)\leq 1,\ \forall\,i,\,j,\ \forall\,t\in[0,\,T_\M]\right\}, $$
which is readily seen to be closed. We point out that, by virtue of the equivalence of all norms in finite dimension, a different $p$-norm may be considered in $\R^{mn}$. However, the 1-norm takes advantage more directly of the interpretation in terms of number density: if $\u\in\B$ then $\norm{\u(t)}_1$ is the density of the distribution $\u$ at time $t$ (cf. Eq.~\eqref{eq:N}).

A solution to our initial/boundary-value problem is defined as follows:
\begin{definition}
A function $\f=\f(t):[0,\,T_\M]\to\R^{mn}$ is said to be a \emph{mild solution} to Problem~\eqref{IBVP} if $\f\in\B$ and $\f$ satisfies~\eqref{IBVPmild}.
\label{def:mild.solution}
\end{definition}
Notice that $\f\in\B$ guarantees that $\f$ fulfills the basic requirements of a physically acceptable distribution function: non-negativity and boundedness in each space/speed class in the whole time interval of existence.

The following box summarizes the hypotheses, to be possibly regarded as modeling guidelines, under which we will achieve the proofs of the announced results.

\begin{framed}
\centering\textbf{Assumptions for Problem~\eqref{IBVP}}
\begin{enumerate}
\item \label{ip1} The initial and boundary data $f^0_{ij}$, $\bar{f}_{j}$ belong to $\B$.
\item \label{ip2} The flux limiter $\Phi_{i,i+1}[\f]$ is defined by a function $\Phi:[0,\,1] \times [0, \,1] \to [0,\,1]$ acting on the densities of the two adjacent cells $I_1$, $I_{i+1}$ such that $\Phi_{i,i+1}[\f]=\Phi(\rho_i,\,\rho_{i+1})$. The function $\Phi$ has the following properties:
\begin{enumerate}
\item[(i)] $0\leq\Phi(u,\,v)\leq 1$, $\forall\,u,\,v\in [0,\,1]$;
\item[(ii)] $\Phi(u,\,v)u\leq 1-v$, $\forall\,(u,\,v)\in [0,\,1]\times[0,\,1]$ such that $u+v>1$;
\item[(iii)] $\abs{\Phi(u_2,\,v_2)-\Phi(u_1,\,v_1)}\leq\Lip(\Phi)(\abs{u_2-u_1}+\abs{v_2-v_1})$, $\forall\,(u_1,\,v_1),\,(u_2,v_2)\in [0,\,1]\times [0,\,1]$.
\end{enumerate}
\item \label{ip3} The boundary datum $\bar{\Phi}$ is as in Assumption \ref{ip2}, i.e., $\bar{\Phi}=\Phi(\bar{\rho},\,\rho_1)$ where $\bar{\rho}=\sum_{j=1}^{n}\bar{f}_j$ is the prescribed density of incoming cars.
\item The interaction rate $\eta(i)$ is:
\begin{enumerate}
\item \label{ip4a} non-negative and uniformly bounded, i.e., there exists a constant $\bar{\eta}>0$ such that
$$ 0\leq\eta(i)[\u]\leq\bar{\eta}, \quad \forall\,i=1,\,\dots\,m,\ \forall\,\u\in\B; $$
\item \label{ip4b} Lipschitz continuous with respect to the distribution in $\B$, i.e.,
$$ \abs{\eta(i)[\u](t)-\eta(i)[\v](t)}\leq\Lip(\eta(i))\norm{\u(t)-\v(t)}_1,
	\quad \forall\,t\in [0,\, T_\M],\ \forall\,\u,\,\v\in\B. $$
\end{enumerate}
\item \label{ip5} The elements $A^j_{hk}(i)$ of the table of games satisfy Eq.~\eqref{eq:tog.sum.1} and the following Lipschitz continuity condition:
$$ \abs{A^j_{hk}(i)[\u](t)-A^j_{hk}(i)[\v](t)}\leq \Lip(A^j_{hk}(i))\norm{\u(t)-\v(t)}_1,
	\quad \forall\,t\in [0,\,T_\M],\ \forall\,\u,\,\v\in\B. $$
\end{enumerate}	
\end{framed}

We henceforth assume that Assumptions~\ref{ip1}--\ref{ip4a} are always satisfied, indeed they are needed in all of the next results. On the contrary, we will explicitly mention Assumptions~\ref{ip4b}, \ref{ip5} only when they will be actually used.

\subsection{Uniqueness and continuous dependence on the data}
The following theorem states that the mild solution to Problem~\eqref{IBVP} is unique in $\B$ and that it depends continuously on the initial and boundary data.

\begin{theorem}[Uniqueness and continuous dependence]
Under Assumptions~\ref{ip4b}, \ref{ip5}, let $(\f^0,\,\bar{\f},\,\bar{\Phi}^f)$, $(\g^0,\,\bar{\g},\,\bar{\Phi}^g)$ be two sets of initial and boundary data and $\f,\,\g\in\B$ two corresponding mild solutions to Problem~\eqref{IBVP}. Then there exists $\C>0$ such that:
$$ \norm{\f-\g}_\infty\leq\C\left[\norm{\f^0-\g^0}_1+\int_0^{T_\M}\left(\abs{\bar{\Phi}^f(t)-\bar{\Phi}^g(t)}
	+\norm{\bar{\f}(t)-\bar{\g}(t)}_1\right)\,dt\right]. $$
In particular, there is at most one solution corresponding to a given set of initial and boundary data.
\label{theorem1}
\end{theorem}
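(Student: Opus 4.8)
\noindent The plan is to subtract the two mild formulations~\eqref{IBVPmild} satisfied by $\f$ and $\g$, estimate the resulting integrand in the $1$-norm term by term, and close the argument by Gronwall's lemma. The recurring elementary fact is that, because $\f,\g\in\B$, every ``coefficient'' quantity is a priori bounded: $0\le f_{ij},g_{ij}\le 1$, $0\le\rho_i^f,\rho_i^g\le 1$, $0\le\Phi(\cdot,\cdot)\le 1$ (Assumption~\ref{ip2}(i)), $0\le\eta(i)[\cdot]\le\bar\eta$ (Assumption~\ref{ip4a}), $0\le A_{hk}^j(i)[\cdot]\le 1$ (Assumption~\ref{ip5}); moreover each cell density is a $1$-Lipschitz function of the distribution, $\abs{\rho_i^f-\rho_i^g}\le\norm{\f-\g}_1$, so Assumption~\ref{ip2}(iii) gives $\abs{\Phi_{i,i+1}[\f]-\Phi_{i,i+1}[\g]}\le 2\Lip(\Phi)\norm{\f-\g}_1$, and the same for $\bar\Phi$ via Assumption~\ref{ip3}.

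Fix $t\in[0,\,T_\M]$ and $(i,j)$. From~\eqref{IBVPmild}, $f_{ij}(t)-g_{ij}(t)=f^0_{ij}-g^0_{ij}+\int_0^t(T_{ij}+\mathcal{G}_{ij}+\mathcal{L}_{ij})(s)\,ds$, where $T_{ij}$ gathers the transport/flux-limiter differences, $\mathcal{G}_{ij}=G_{ij}[\g,\g]-G_{ij}[\f,\f]$, and $\mathcal{L}_{ij}=f_{ij}L_{ij}[\f]-g_{ij}L_{ij}[\g]$. Each piece is handled by the standard add-and-subtract device. For the transport term write, for each internal index,
$$ \Phi_{i,i+1}[\f]f_{ij}-\Phi_{i,i+1}[\g]g_{ij}=\Phi_{i,i+1}[\f](f_{ij}-g_{ij})+\bigl(\Phi_{i,i+1}[\f]-\Phi_{i,i+1}[\g]\bigr)g_{ij}, $$
and similarly for the term carrying the index $i-1$; at the boundary cell $i=1$ one uses instead $f_{0j}=\bar{f}_j$, $g_{0j}=\bar{g}_j$ together with the prescribed $\bar{\Phi}^f,\bar{\Phi}^g$, and at $i=m$ the outgoing flux limiter is prescribed. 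Using the bounds above and $v_j\le 1$ one obtains $\sum_{i,j}\abs{T_{ij}(s)}\le C_1\norm{\f(s)-\g(s)}_1+C_1'\bigl(\norm{\bar{\f}(s)-\bar{\g}(s)}_1+\abs{\bar{\Phi}^f(s)-\bar{\Phi}^g(s)}\bigr)$. The loss term is treated identically through $L_{ij}[\f]=\eta(i)[\f]\rho_i^f$ and Assumptions~\ref{ip4a}, \ref{ip4b}: $\sum_{i,j}\abs{\mathcal{L}_{ij}(s)}\le C_2\norm{\f(s)-\g(s)}_1$. For the quadratic gain term one peels off one factor at a time, rewriting
$$ \eta(i)[\f]A_{hk}^j(i)[\f]f_{ih}f_{ik}-\eta(i)[\g]A_{hk}^j(i)[\g]g_{ih}g_{ik} $$
as a telescoping sum of four contributions, each containing exactly one of the factor-differences $\eta(i)[\f]-\eta(i)[\g]$, $A_{hk}^j(i)[\f]-A_{hk}^j(i)[\g]$, $f_{ih}-g_{ih}$, $f_{ik}-g_{ik}$ multiplied by bounded quantities; Assumptions~\ref{ip4a}, \ref{ip4b}, \ref{ip5} and $\sum_j A_{hk}^j(i)=1$ then yield $\sum_{i,j}\abs{\mathcal{G}_{ij}(s)}\le C_3\norm{\f(s)-\g(s)}_1$.

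Summing over $i,j$, using the triangle inequality and collecting the three estimates gives
\begin{multline*}
\norm{\f(t)-\g(t)}_1\le\norm{\f^0-\g^0}_1
  +C'\int_0^{T_\M}\!\bigl(\abs{\bar{\Phi}^f(s)-\bar{\Phi}^g(s)}+\norm{\bar{\f}(s)-\bar{\g}(s)}_1\bigr)\,ds \\
  +C\int_0^t\norm{\f(s)-\g(s)}_1\,ds,
\end{multline*}
with $C:=C_1+C_2+C_3$ and $C':=\max\{C_1',1\}$ (the $\mathcal{L}$ and $\mathcal{G}$ contributions feed only the last integral). Writing $D$ for the sum of the first two, $t$-independent, terms on the right-hand side, Gronwall's inequality gives $\norm{\f(t)-\g(t)}_1\le D\,e^{Ct}\le D\,e^{CT_\M}$ for all $t\in[0,\,T_\M]$; taking the supremum over $t$ produces the stated estimate with $\C:=C'e^{CT_\M}$. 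Uniqueness for given data follows immediately by choosing $(\f^0,\bar{\f},\bar{\Phi}^f)=(\g^0,\bar{\g},\bar{\Phi}^g)$, whence $\norm{\f-\g}_\infty=0$.

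The argument is essentially routine once the space $(\B,\norm{\cdot}_\infty)$ is fixed; the only delicate points are the bookkeeping of the flux-limiter differences at the two boundary cells $i=1$ and $i=m$, where the ``ghost'' data $f_{0j}$, $\Phi_{0,1}$ and the prescribed outgoing limiter must be inserted correctly, and the verification that the quadratic gain operator is genuinely Lipschitz on $\B$ --- which is precisely where the a priori confinement $\f,\g\in\B$ (rather than a mere $L^1$ bound) is exploited, since it turns each factor in $G_{ij}$ into a bounded one.
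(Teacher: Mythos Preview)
Your argument is correct and follows essentially the same route as the paper: subtract the two mild formulations, estimate the transport, gain, and loss contributions in the $1$-norm via the add--and--subtract device together with the boundedness and Lipschitz Assumptions~\ref{ip2}--\ref{ip5}, isolate the boundary contribution at $i=1$, and close with Gronwall's inequality followed by a supremum in $t$. The only cosmetic difference is that you majorize the boundary-data integral by its value on $[0,T_\M]$ \emph{before} invoking Gronwall (making the forcing term constant), whereas the paper keeps it as $\int_0^t$ and applies the Gronwall variant with a time-dependent forcing; both lead to the same final estimate.
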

\begin{proof}
Subtracting term by term the mild equations satisfied by $\f$, $\g$ (cf. Eq.~\eqref{IBVPmild}) and then summing over $i=1,\,\dots,\,m$ and $j=1,\,\dots,\,n$ gives:
\begin{align*}
	\norm{\f(t)-\g(t)}_1\leq\norm{\f^0-\g^0}_1 &+ \sum_{i=1}^m\sum_{j=1}^n\Biggl[\int_0^t
		\vert v_j(\Phi_{i-1,i}[\f](s) f_{i-1,j}(s)-\Phi_{i-1,i}[\g](s)g_{i-1,j}(s) \\
	&\phantom{+} \qquad\qquad +\Phi_{i,i+1}[\g](s)g_{ij}(s)-\Phi_{i,i+1}[\f](s)f_{ij}(s))\vert\,ds \\
	&\phantom{+} +\int_0^t\Bigl(\abs{G_{ij}[\g,\,\g](s)-G_{ij}[\f,\,\f](s)}+
		\vert f_{ij}(s)L_{ij}[\f](s) \\
	&\phantom{+} \qquad\qquad -g_{ij}(s)L_{ij}[\g](s)\vert\Bigr)\,ds\Biggr].
\end{align*}
By exploiting the Lipschitz continuity of the various terms and extracting from the sums the boundary terms corresponding to $i=1$ we then discover:
\begin{multline}
	\sum_{i=1}^m\sum_{j=1}^n\abs{\Phi_{i-1,i}[\f](s)f_{i-1,j}(s)-\Phi_{i-1,i}[\g](s)g_{i-1,j}(s)
		+\Phi_{i,i+1}[\g](s)g_{ij}(s)-\Phi_{i,i+1}[\f](s)f_{ij}(s)} \\
			\leq 2(2\Lip(\Phi)+1)\norm{\f(s)-\g(s)}_1+\abs{\bar{\Phi}^f(s)-\bar{\Phi}^g(s)}
				+\norm{\bar{\f}(s)-\bar{\g}(s)}_1,
\label{1term}
\end{multline}
\begin{multline}
	\sum_{i=1}^{m}\sum_{j=1}^{n}\abs{G_{ij}[\g,\,\g](s)-G_{ij}[\f,\,\f](s)} \\
		\leq\left[2\bar{\eta}n+\sum_{i=1}^m\left(n^2\Lip(\eta(i))+
			\bar{\eta}\sum_{h,\,k,\,j=1}^n\Lip(A^j_{hk}(i))\right)\right]\norm{\f(s)-\g(s)}_1,	
	\label{2term}
\end{multline}
\begin{equation}
	\sum_{i=1}^{m}\sum_{j=1}^{n}\abs{f_{ij}(s)L_{ij}[\f](s)-g_{ij}(s)L_{ij}[\g](s)}
		\leq\left(2\bar{\eta}n+n^2\sum_{i=1}^m\Lip(\eta(i))\right)\norm{\f(s)-\g(s)}_1,
	\label{3term}
\end{equation}
whence
$$ \norm{\f(t)-\g(t)}_1\leq\norm{\f^0-\g^0}_1+\C\int_0^t\left(\norm{\f(s)-\g(s)}_1+
		\abs{\bar{\Phi}^f(s)-\bar{\Phi}^g(s)}+\norm{\bar{\f}(s)-\bar{\g}(s)}_1\right)\,ds $$
where $\C$ is a suitable constant deduced from all those appearing in Eqs.~\eqref{1term}--\eqref{3term}. Gronwall's inequality implies now:
$$ \norm{\f(t)-\g(t)}_1\leq e^{\C t}\left[\norm{\f^0-\g^0}_1+
	\int_0^t\left(\abs{\bar{\Phi}^f(s)-\bar{\Phi}^g(s)}+\norm{\bar{\f}(s)-\bar{\g}(s)}_1\right)\,ds\right] $$
whence, taking the supremum over $t\in[0,\,T_\M]$ on both sides, we obtain the continuous dependence estimate. Uniqueness follows straightforwardly from it by letting $\f^0=\g^0$, $\bar{\f}=\bar{\g}$, and $\bar{\Phi}^f=\bar{\Phi}^g$.
\end{proof}

\subsection{Existence}
In this section we prove the existence of the mild solution to Problem~\eqref{IBVP}. The idea of the proof is  to consider at first the model at discrete time instants $t^\kk_\n=\n\Delta{t}_\kk$, where the index $\n$ labels the discrete time and the index $\kk$ is a mesh parameter denoting the level of refinement of the time grid. In particular, the time step $\Delta{t}_\kk$ is chosen such that it tends to zero for $\kk\to\infty$. After establishing some technical properties of the iterates of such a discrete-in-time model (cf. Lemma~\ref{lemma1}), we pass from discrete to continuous time by interpolating the iterates and we prove that in the limit $\kk\to\infty$ this interpolation converges to a distribution function $\f\in\B$ (cf. Lemma~\ref{lemma2}). Finally, we show that $\f$ is precisely a mild solution of to Problem~\eqref{IBVP} (cf. Theorem~\ref{theorem2}).

\paragraph*{Step 1}
We consider Problem~\eqref{IBVP} at discrete time instants, cf. Eq~\eqref{dis.eq.balance}:
\begin{align}
	f^{\n+1,\kk}_{ij}=f^{\n,\kk}_{ij} &- \Delta{t}_\kk v_j(\Phi_{i,i+1}[\f^{\n,\kk}]f^{\n,\kk}_{ij}-
		\Phi_{i-1,i}[\f^{\n,\kk}]f^{\n,\kk}_{i-1,j}) \nonumber \\
	&+ \Delta{t}_\kk(G_{ij}[\f^{\n,\kk},\,\f^{\n,\kk}]-f^{\n,\kk}_{ij}L_{ij}[\f^{\n,\kk}]).
	\label{discrete}
\end{align}
The following lemma gives some properties of the iterates $f^{\n,\kk}_{ij}$.

\begin{lemma}
If $\Delta{t}_\kk$ is sufficiently small then $\f^{\n,\kk}\in\B$ for all $\n,\,\kk\geq 0$.
\label{lemma1}
\end{lemma}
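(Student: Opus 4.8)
The plan is to argue by induction on the discrete-time index $\n$, keeping the mesh parameter $\kk$ (hence the step $\Dt_\kk$) fixed. The base case $\f^{0,\kk}=\f^0\in\B$ is Assumption~\ref{ip1}. For the inductive step, suppose $\f^{\n,\kk}\in\B$; the first move is to regroup the one-step map~\eqref{discrete} so as to collect the coefficient of $f^{\n,\kk}_{ij}$,
\[
  f^{\n+1,\kk}_{ij}=\Bigl(1-\Dt_\kk\bigl(v_j\Phi_{i,i+1}[\f^{\n,\kk}]+L_{ij}[\f^{\n,\kk}]\bigr)\Bigr)f^{\n,\kk}_{ij}
    +\Dt_\kk v_j\Phi_{i-1,i}[\f^{\n,\kk}]f^{\n,\kk}_{i-1,j}+\Dt_\kk G_{ij}[\f^{\n,\kk},\,\f^{\n,\kk}],
\]
so that $f^{\n+1,\kk}_{ij}$ is displayed as a combination of $f^{\n,\kk}_{ij}$, of the inflow from cell $I_{i-1}$, and of the gain operator.

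Non-negativity comes first. The inflow term and $G_{ij}$ are non-negative because $v_j\geq0$, $\Phi_{i-1,i}[\f^{\n,\kk}]\geq0$ (Assumption~\ref{ip2}(i)), $f^{\n,\kk}_{i-1,j}\geq0$, and $A^j_{hk}(i)\geq0$. For the bracket multiplying $f^{\n,\kk}_{ij}$ I would use $v_j\leq1$, $\Phi_{i,i+1}[\f^{\n,\kk}]\leq1$ (Assumption~\ref{ip2}(i)), and $L_{ij}[\f^{\n,\kk}]=\eta(i)[\f^{\n,\kk}]\rho_i^{\n,\kk}\leq\bar{\eta}$, the latter from Assumption~\ref{ip4a} together with $\rho_i^{\n,\kk}=\sum_j f^{\n,\kk}_{ij}\leq1$; the bracket is then $\geq1-\Dt_\kk(1+\bar{\eta})\geq0$ once $\Dt_\kk\leq(1+\bar{\eta})^{-1}$. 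Hence $f^{\n+1,\kk}_{ij}\geq0$ for all $i,\,j$.

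The per-cell bound $\sum_j f^{\n+1,\kk}_{ij}\leq1$ is where the structure of the model really enters. Summing~\eqref{discrete} over $j$ and invoking $\sum_j A^j_{hk}(i)=1$ from~\eqref{eq:tog.sum.1}, the interaction contribution cancels, $\sum_j\bigl(G_{ij}[\f^{\n,\kk},\,\f^{\n,\kk}]-f^{\n,\kk}_{ij}L_{ij}[\f^{\n,\kk}]\bigr)=0$, leaving
\[
  \rho_i^{\n+1,\kk}=\rho_i^{\n,\kk}-\Dt_\kk\Phi_{i,i+1}[\f^{\n,\kk}]q_i^{\n,\kk}+\Dt_\kk\Phi_{i-1,i}[\f^{\n,\kk}]q_{i-1}^{\n,\kk},
  \qquad q_i^{\n,\kk}:=\sum_{j=1}^n v_j f^{\n,\kk}_{ij}.
\]
I drop the non-positive outflow term; for the inflow I bound $q_{i-1}^{\n,\kk}\leq\rho_{i-1}^{\n,\kk}$ (as $v_j\leq1$) and then use the decisive property of the flux limiter: since $\Phi_{i-1,i}[\f^{\n,\kk}]=\Phi(\rho_{i-1}^{\n,\kk},\,\rho_i^{\n,\kk})$, Assumption~\ref{ip2}(i)--(ii) give $\Phi(\rho_{i-1}^{\n,\kk},\,\rho_i^{\n,\kk})\,\rho_{i-1}^{\n,\kk}\leq1-\rho_i^{\n,\kk}$ regardless of whether $\rho_{i-1}^{\n,\kk}+\rho_i^{\n,\kk}>1$ or $\leq1$. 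Thus $\rho_i^{\n+1,\kk}\leq(1-\Dt_\kk)\rho_i^{\n,\kk}+\Dt_\kk\leq1$ provided $\Dt_\kk\leq1$; combined with the non-negativity just proved this yields $f^{\n+1,\kk}_{ij}\leq\rho_i^{\n+1,\kk}\leq1$ as well, so $\f^{\n+1,\kk}\in\B$.

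It remains to check that the argument is uniform in $i$, including the boundary cells. For $i=1$ nothing changes once one reads the ``$i=0$'' quantities as the boundary data $f^{\n,\kk}_{0j}=\bar{f}_j(t^\kk_\n)\in\B$ (Assumption~\ref{ip1}) and $\Phi_{0,1}=\bar{\Phi}=\Phi(\bar{\rho},\,\rho_1)$ (Assumption~\ref{ip3}), which satisfy Assumption~\ref{ip2}; for $i=m$ the right-boundary limiter $\Phi_{m,m+1}$, valued in $[0,\,1]$, appears only inside the discarded non-positive term and inside the coefficient estimate above. The smallness requirement is therefore $\Dt_\kk\leq\min\{1,\,(1+\bar{\eta})^{-1}\}$, a bound independent of $\n$, $i$ and $\kk$, so the induction closes and the lemma holds for every $\kk$ with $\Dt_\kk$ below this threshold. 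I expect the only genuinely delicate step to be the $\sum_j f^{\n+1,\kk}_{ij}\leq1$ estimate: a naive transport update would let the density of $I_i$ overshoot because of the inflow from $I_{i-1}$, and it is exactly Assumption~\ref{ip2}(ii)---the flux limiter cannot push into a cell more mass than the free room left there---that rescues the bound; everything else is routine bookkeeping with the a priori bounds on $v_j$, $\Phi$, $\eta(i)$ and $A^j_{hk}(i)$.
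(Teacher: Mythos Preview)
Your proof is correct and follows the same inductive skeleton as the paper: non-negativity via a CFL-type bound on the coefficient of $f^{\n,\kk}_{ij}$, and the per-cell density bound $\rho_i^{\n+1,\kk}\leq 1$ via cancellation of the interaction terms plus the flux-limiter estimate $\Phi(\rho_{i-1},\rho_i)\rho_{i-1}\leq 1-\rho_i$. Where you diverge is in the handling of the componentwise upper bound $f^{\n+1,\kk}_{ij}\leq 1$: the paper proves this as a separate step, bounding the interaction contribution by $G_{ij}-f_{ij}L_{ij}\leq\bar{\eta}\bigl[1-(f^{\n,\kk}_{ij})^2\bigr]$ and then combining with the flux-limiter estimate to reach $f^{\n+1,\kk}_{ij}\leq f^{\n,\kk}_{ij}+\Dt_\kk(1+2\bar{\eta})(1-f^{\n,\kk}_{ij})$, which forces the stricter constraint $\Dt_\kk<1/(1+2\bar{\eta})$. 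You instead observe that once non-negativity and $\sum_j f^{\n+1,\kk}_{ij}\leq 1$ are in hand, the componentwise bound is automatic. This is a genuine streamlining: it removes one of the paper's three estimates, avoids the quadratic bound on the gain--loss difference, and yields the slightly weaker time-step restriction $\Dt_\kk\leq 1/(1+\bar{\eta})$. The paper's route has the minor advantage of giving an independent, self-contained estimate on each $f_{ij}$, but yours is cleaner.
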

\begin{proof}
The proof is by induction on $\n$, assuming that $\f^{\n,\kk}\in\B$ and that the initial and boundary data have been chosen consistently with the properties of the distribution functions in $\B$.

\begin{enumerate}
\item[(i)] \emph{Non-negativity of the iterates}. Taking into account the non-negativity of the terms containing $\Phi_{i-1,i}[\f^{\n,\kk}]$ and $G_{ij}[\f^{\n,\kk},\,\f^{\n,\kk}]$, it results:
$$ f^{\n+1,\kk}_{ij}\geq f^{\n,\kk}_{ij}[1-\Delta{t}_\kk(v_j\Phi_{i,i+1}[\f^{\n,\kk}]+L_{ij}[\f^{\n,\kk}])]
	\geq f^{\n,\kk}_{ij}[1-\Delta{t}_\kk(1+L_{ij}[\f^{\n,\kk}])], $$
being $v_j,\,\Phi_{i,i+1}[\f^{\n,\kk}]\leq 1$. Moreover $L_{ij}[\f^{\n,\kk}]\leq\bar{\eta}\rho_i^{\n,\kk}\leq\bar{\eta}$, hence if $\Delta{t}_\kk<1/(1+\bar{\eta})$ we deduce
$$ f^{\n+1,\kk}_{ij}\geq f^{\n,\kk}_{ij}[1-\Delta{t}_\kk(1+\bar{\eta})]\geq 0. $$
\item[(ii)] \emph{Boundedness of the iterates}. Considering that $v_j\leq 1$ we have:
$$ f^{\n+1,\kk}_{ij}\leq f^{\n,\kk}_{ij}+\Delta{t}_\kk(\Phi_{i-1,i}[\f^{\n,\kk}]f^{\n,\kk}_{i-1,j}
	+ G_{ij}[\f^{\n,\kk},\,\f^{\n,\kk}]-f^{\n,\kk}_{ij}L_{ij}[\f^{\n,\kk}]). $$
But
$$ G_{ij}[\f^{\n,\kk},\,\f^{\n,\kk}]-f^{\n,\kk}_{ij}L_{ij}[\f^{\n,\kk}]
	\leq\bar{\eta}\left[1-\left(f^{\n,\kk}_{ij}\right)^2\right], $$
thus we conclude
\begin{equation}
	f^{\n+1,\kk}_{ij}\leq f^{\n,\kk}_{ij}+\Delta{t}_\kk\left\{\Phi_{i-1,i}[\f^{\n,\kk}]f^{\n,\kk}_{i-1,j}
		+\bar{\eta}\left[1-\left(f^{\n,\kk}_{ij}\right)^2\right]\right\}.
	\label{f:magg}
\end{equation}
Now, if $f^{\n,\kk}_{ij}=1$ then from Assumption~\ref{ip2}(ii) it follows $f^{\n+1,\kk}_{ij}\leq 1$ as desired. Conversely, if $f^{\n,\kk}_{ij}<1$ then according to Assumption~\ref{ip2}(i)-(ii)
it results
\begin{equation}
	\Phi_{i-1,i}[\f^{\n,\kk}]f^{\n,\kk}_{i-1,j}\leq
		\Phi_{i-1,i}[\f^{\n,\kk}]\rho^{\n,\kk}_{i-1}\leq
			\left\{
			\begin{array}{ll}
				1-\rho^{\n,\kk}_i & \text{if\ } \rho^{\n,\kk}_i+\rho^{\n,\kk}_{i-1}>1 \\
				\rho^{\n,\kk}_{i-1}\leq 1-\rho^{\n,\kk}_i & \text{if\ } \rho^{\n,\kk}_i+\rho^{\n,\kk}_{i-1}\leq 1
			\end{array}
			\right\}
			\leq 1-\rho^{\n,\kk}_i.
	\label{phi_rho}
\end{equation}
Hence we can continue the estimate~\eqref{f:magg} as:
$$ f^{\n+1,\kk}_{ij}\leq f^{\n,\kk}_{ij}+\Delta{t}_\kk\left\{1-\rho^{\n,\kk}_{i}
	+\bar{\eta}\left[1-\left(f^{\n,\kk}_{ij}\right)^2\right]\right\}
		\leq f^{\n,\kk}_{ij}+\Delta{t}_\kk(1+2\bar{\eta})(1-f^{\n,\kk}_{ij}) $$
whence we see again that $f^{\n+1,\kk}_{ij}\leq 1$ if $\Delta{t}_\kk<1/(1+2\bar{\eta})$.
\item[(iii)] \emph{Boundedness of the sum of the iterates}. In order to prove that $\sum_{j=1}^nf^{\n,\kk}_{ij}\leq 1$, we note that from Eq.~\eqref{eq:tog.sum.1} it follows
$\sum_{j=1}^n(G_{ij}[\f^{\n,\kk},\,\f^{\n,\kk}]-f^{\n,\kk}_{ij}L_{ij}[\f^{\n,\kk}])=0$. Consequently, by Eq.~\eqref{phi_rho} we obtain the thesis, since
$$ \sum_{j=1}^n f^{\n+1,\kk}_{ij}\leq \rho^{\n,\kk}_i
	+\Delta{t}_\kk\Phi_{i-1,i}[\f^{\n,\kk}]\rho^{\n,\kk}_{i-1}
		\leq\rho^{\n,\kk}_i+\Delta{t}_\kk(1-\rho^{\n,\kk}_i)\leq 1 $$
provided $\Delta{t}_\kk\leq 1$.
\end{enumerate}

Finally, we note that all properties proved so far hold simultaneously if $\Delta{t}_\kk<1/(1+2\bar{\eta})$.
\end{proof}

\paragraph*{Step 2}
Now we pass by interpolation from discrete to continuous time. To this end, we introduce the following function $\hat{\f}=\hat{\f}^\kk(t):[0,\,T_\M]\to\R^{mn}$ interpolating piecewise linearly the iterates $f^{\n,\kk}_{ij}$:
\begin{equation}
	\hat{f}^\kk_{ij}(t)=\sum_{\n=1}^{N_\kk}\left[\left(1-\frac{t-t^\kk_{\n-1}}{\Delta{t}_\kk}\right)f^{\n-1,\kk}_{ij}
		+\frac{t-t^\kk_{\n-1}}{\Delta{t}_\kk}f^{\n,\kk}_{ij}\right]\1_{[t^\kk_{\n-1},\,t^\kk_\n]}(t).
	\label{interpolante}
\end{equation}
Here $N_\kk$ is the total number of time steps in the $\kk$-th mesh. We assume that $N_\kk$ and $\Delta{t}_\kk$ are chosen in such a way that $N_\kk\Delta{t}_\kk=T_\M$ independently of the refinement parameter $\kk$.

For the function $\hat{\f}^\kk$ we can prove the following result.
\begin{lemma}
There exists  $\f \in \B$ such that
$$ \lim_{\kk\to\infty}\norm{\hat{\f}^\kk-\f}_\infty=0. $$
\label{lemma2}
\end{lemma}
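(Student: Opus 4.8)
The natural route is to show that the family $\{\hat{\f}^\kk\}_\kk$ is relatively compact in $X_{T_\M}$ via the Arzel\`a--Ascoli theorem, extract a convergent subsequence, and then upgrade convergence of the subsequence to convergence of the whole sequence (the latter will follow \emph{a posteriori} from Theorem~\ref{theorem1}, since the limit will turn out to be a mild solution and those are unique). By Lemma~\ref{lemma1} all iterates $\f^{\n,\kk}$ lie in $\B$, and since $\B$ is convex and $\hat{\f}^\kk(t)$ is, on each subinterval $[t^\kk_{\n-1},t^\kk_\n]$, a convex combination of $\f^{\n-1,\kk}$ and $\f^{\n,\kk}$, we get $\hat{\f}^\kk(t)\in\B$ for every $t$ and every $\kk$; in particular the family is uniformly bounded (pointwise in $\B$, hence $\norm{\hat{\f}^\kk(t)}_1\le m$).

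The crux is equicontinuity. First I would bound the one-step increments: from the discrete scheme \eqref{discrete}, using $v_j\le 1$, $\Phi\le 1$, $0\le\f^{\n,\kk}\in\B$, Assumption~\ref{ip4a} ($\eta\le\bar\eta$), and Eq.~\eqref{eq:tog.sum.1} (so $G_{ij}\le\bar\eta\,\rho_i\le\bar\eta$ and $f_{ij}L_{ij}\ge 0$), one obtains
$$ \abs{f^{\n+1,\kk}_{ij}-f^{\n,\kk}_{ij}}\le \Dt_\kk\bigl(v_j\Phi_{i,i+1}f^{\n,\kk}_{ij}+v_j\Phi_{i-1,i}f^{\n,\kk}_{i-1,j}+G_{ij}+f^{\n,\kk}_{ij}L_{ij}\bigr)\le C_0\,\Dt_\kk $$
for a constant $C_0$ depending only on $\bar\eta$ (and $n$, after summing the bound $\sum_j G_{ij}\le\bar\eta$), uniformly in $\n,\kk$. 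Summing the telescoping increments across the grid cells straddled by an interval $[s,t]$ and using $\abs{t-s}\le(\lfloor(t-s)/\Dt_\kk\rfloor+1)\Dt_\kk$, together with the explicit linear interpolation formula \eqref{interpolante}, this upgrades to a genuine (grid-independent) estimate of the form $\norm{\hat{\f}^\kk(t)-\hat{\f}^\kk(s)}_1\le C_1\bigl(\abs{t-s}+\Dt_\kk\bigr)$. Since $\Dt_\kk\to 0$, for any $\varepsilon>0$ one can choose $\delta$ and then $\kk$ large so that $\abs{t-s}<\delta$ forces $\norm{\hat{\f}^\kk(t)-\hat{\f}^\kk(s)}_1<\varepsilon$ simultaneously for all large $\kk$; the finitely many remaining small-$\kk$ members are individually (uniformly) continuous, so the whole family is equicontinuous.

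Arzel\`a--Ascoli then yields a subsequence $\hat{\f}^{\kk_r}\to\f$ in $\norm{\cdot}_\infty$, and $\f\in\B$ because $\B$ is closed (stated in the excerpt) and closed under uniform limits. To conclude that the \emph{entire} sequence converges, one argues by contradiction: if some subsequence stayed at distance $\ge\varepsilon_0$ from $\f$, a further subsequence of it would converge (again by Arzel\`a--Ascoli) to some $\tilde{\f}\in\B$; but in Theorem~\ref{theorem2} (the next result) one shows every such limit is a mild solution of Problem~\eqref{IBVP}, and by Theorem~\ref{theorem1} the mild solution with the given data is unique, forcing $\tilde{\f}=\f$, a contradiction. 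Hence $\lim_{\kk\to\infty}\norm{\hat{\f}^\kk-\f}_\infty=0$.

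**Expected main obstacle.** The delicate point is producing an equicontinuity estimate that is truly uniform in the refinement parameter $\kk$: the per-step bound $C_0\Dt_\kk$ is easy, but one must sum the right number of steps to control $\norm{\hat{\f}^\kk(t)-\hat{\f}^\kk(s)}_1$ in terms of $\abs{t-s}$ alone (up to the vanishing correction $\Dt_\kk$), keeping the constant $C_1$ independent of $\kk$ — here it is essential that the bounds on $\Phi$, $\eta$, $G_{ij}$, $L_{ij}$ come from \ref{ip2}(i), \ref{ip4a}, and \eqref{eq:tog.sum.1} (not from Lipschitz constants, which are not needed for this lemma), and that Lemma~\ref{lemma1} confines all iterates to $\B$ so these bounds actually apply. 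Everything else is a routine application of Arzel\`a--Ascoli plus the already-proved uniqueness.
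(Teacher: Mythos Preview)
Your approach is essentially the same as the paper's: show $\hat{\f}^\kk\in\B$ by convexity, get uniform boundedness from $\B$, derive equicontinuity from a per--step increment bound $\abs{f^{\n+1,\kk}_{ij}-f^{\n,\kk}_{ij}}\le C\,\Dt_\kk$ coming from \eqref{discrete}, and apply Ascoli--Arzel\`a. Two remarks are worth making.

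First, your equicontinuity bound $\norm{\hat{\f}^\kk(t)-\hat{\f}^\kk(s)}_1\le C_1(\abs{t-s}+\Dt_\kk)$ is looser than necessary and forces the extra ``finitely many small $\kk$'' argument. The paper exploits the piecewise--linear structure more carefully: on each partial subinterval the slope of $\hat{f}^\kk_{ij}$ equals $(f^{\n,\kk}_{ij}-f^{\n-1,\kk}_{ij})/\Dt_\kk$, whose modulus is $\le 2(1+\bar\eta)$ by the one--step bound, so summing the partial and full increments telescopically gives the clean uniform Lipschitz estimate $\abs{\hat{f}^\kk_{ij}(t_2)-\hat{f}^\kk_{ij}(t_1)}\le 2(1+\bar\eta)\abs{t_2-t_1}$ with no $\Dt_\kk$ correction. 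Your version still works, but this is simpler.

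Second, the paper's own proof stops at ``up to subsequences'', so it does not actually match the full--sequence statement of the lemma. Your closing paragraph (every subsequential limit is a mild solution by the argument of Theorem~\ref{theorem2}, hence equals $\f$ by Theorem~\ref{theorem1}) is the standard way to close this gap and is logically sound, provided you note that Theorem~\ref{theorem2} only needs the subsequential conclusion of the lemma, so no circularity arises.
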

\begin{proof}
We first claim that $\hat{\f}^\kk\in\B$. In fact, for all $t\in[0,\,T_\M]$ there exists $\bar{\n}\leq N_\kk$ such that $t\in [t^\kk_{\bar{\n}-1},\,t^\kk_{\bar{\n}}]$, thus:
$$ \hat{f}^\kk_{ij}(t)=\left(1-\frac{t-t^\kk_{\bar{\n}-1}}{\Delta{t}_\kk}\right)f^{\bar{\n}-1,\kk}_{ij}
	+\frac{t-t^\kk_{\bar{\n}-1}}{\Delta{t}_\kk}f^{\bar{\n},\kk}_{ij} $$
and according to Lemma~\ref{lemma1} this value satisfies all constraints required pointwise in time to the components of the functions in $\B$. Thus the claim follows from the arbitrariness of $t$.

Next, our goal is to show that the sequence $\{\hat{\f}^\kk\}_{\kk\geq 0}$ is relatively compact in $X_{T_\M}$. Indeed in such a case we can conclude that (up to subsequences) it converges uniformly to some $\f\in X_{T_\M}$. Moreover, since $\{\hat{\f}^\kk\}_{\kk\geq 0}\subset\B$ and $\B$ is closed, we also have that $\f\in\B$. We prove the relative compactness by means of the Ascoli-Arzel\`a compactness criterion:
\begin{enumerate}
\item[(i)] $\{\hat{\f}^\kk\}_{\kk\geq 0}$ is uniformly bounded. Indeed, since $\hat{\f}^\kk\in\B$ we can write
$$ \sup_{\kk\geq 0}\norm{\hat{\f}^\kk}_\infty=\sup_{\kk\geq 0}\sup_{t\in [0,\,T_\M]}
	\sum_{i=1}^m\sum_{j=1}^n\abs{\hat{f}^\kk_{ij}(t)}\leq m<\infty. $$
\item[(ii)] $\{\hat{\f}^\kk\}_{\kk\geq 0}$ is equicontinuous. In fact, let $t_1\in [t^\kk_{\bar{\m}-1},\,t^\kk_{\bar{\m}}]$ and $t_2\in [t^\kk_{\bar{\n}-1},\,t^\kk_{\bar{\n}}]$ with $\m\leq\n$ (i.e., $t_1\leq t_2$), then we can write:
\begin{align}
	\hat{f}^\kk_{ij}(t_2)-\hat{f}^\kk_{ij}(t_1) &= \hat{f}^\kk_{ij}(t_2)+
		\sum_{\ell=\bar{\m}}^{\bar{\n}-1}\hat{f}^\kk_{ij}(t^\kk_\ell)-
			\sum_{\ell=\bar{\m}}^{\bar{\n}-1}\hat{f}^\kk_{ij}(t^\kk_\ell)
				-\hat{f}^\kk_{ij}(t_1) \nonumber \\
	&= [\hat{f}^\kk_{ij}(t_2)-\hat{f}^\kk_{ij}(t^\kk_{\bar{\n}-1})]+
		[\hat{f}^\kk_{ij}(t^\kk_{\bar{\m}})-\hat{f}^\kk_{ij}(t_1)]+
			\sum_{\ell=\bar{\m}}^{\bar{\n}-2}(f^{\ell+1,\kk}_{ij}-f^{\ell,\kk}_{ij}) \nonumber \\
	&=: A_1+A_2+A_3.
\label{equicon}
\end{align}

We begin by estimating the term $A_3$. By Eq.~\eqref{discrete} and applying the same arguments as in the proof of Lemma~\ref{lemma1} we have:
\begin{align*}
	\abs{f^{\ell+1,\kk}_{ij}-f^{\ell,\kk}_{ij}} &= \Delta{t}_\kk\abs{v_j(\Phi_{i-1,i}[\f^{\ell,\kk}]f^{\ell,\kk}_{i-1,j}
		-\Phi_{i,i+1}[\f^{\ell,\kk}]f^{\ell,\kk}_{i,j})+G_{ij}[\f^{\ell,\kk},\,\f^{\ell,\kk}]
			-f^{\ell,\kk}_{ij}L_{ij}[\f^{\ell,\kk}]} \\
	& \leq 2(1+\bar{\eta})\Delta{t}_\kk,
\end{align*}
thus
\begin{equation}
	\abs{A_3}\leq\sum_{\ell=\bar{\m}}^{\bar{\n}-2}\abs{f^{\ell+1,\kk}_{ij}-f^{\ell,\kk}_{ij}}
		\leq 2(\bar{\n}-1-\bar{\m})(1+\bar{\eta})\Delta{t}_\kk=2(1+\bar{\eta})(t^\kk_{\bar{\n}-1}-t^\kk_{\bar{\m}}).
	\label{A_3}
\end{equation}

As for the term $A_2$, we observe that taking into account Eqs.~\eqref{interpolante}, \eqref{A_3} we can write
\begin{align}
	\abs{A_2}=\abs{\hat{f}^\kk_{ij}(t_2)-\hat{f}^\kk_{ij}(t^\kk_{\bar{\n}-1})} &=
 		\abs{\hat{f}^\kk_{ij}(t_2)-f^{\bar{\n}-1,\kk}_{ij}} \nonumber \\
 	&= \frac{t_2-t^\kk_{\bar{\n}-1}}{\Delta{t}_\kk}\abs{f^{\bar{\n},\kk}_{ij}-f^{\bar{\n}-1,\kk}_{ij}}
 		\leq 2(1+\bar{\eta})(t_2-t^\kk_{\bar{\n}-1}).
	\label{A_2}
\end{align}

Likewise,
\begin{align}
	\abs{A_1}=\abs{\hat{f}^\kk_{ij}(t^\kk_{\bar{\m}})-\hat{f}^\kk_{ij}(t_1)} &=
		\abs{f^{\bar{\m},\kk}_{ij}-\hat{f}^\kk_{ij}(t_1)} \nonumber \\
	&=\left(1-\frac{t_1-t^\kk_{\bar{\m}-1}}{\Delta{t}_\kk}\right)
		\abs{f^{\bar{\m},\kk}_{ij}-f^{\bar{\m}-1,\kk}_{ij}}
			\leq 2(1+\bar{\eta})(t^\kk_{\bar{\m}}-t_1).
	\label{A_1}
\end{align}

Finally, by plugging Eqs.~\eqref{A_3}--\eqref{A_1} into Eq.~\eqref{equicon} we get
$$ \abs{\hat{f}^\kk_{ij}(t_2)-\hat{f}^\kk_{ij}(t_1)}\leq 2(1+\bar{\eta})
	(t^\kk_{\bar{\m}}-t_1+t_2-t^\kk_{\bar{\n}-1}+t^\kk_{\bar{\n}-1}-t^\kk_{\bar{\m}})
		=2(1+\bar{\eta})(t_2-t_1), $$
whence
$$ \norm{\hat{\f}^\kk(t_2)-\hat{\f}^\kk(t_1)}_1\leq 2mn(1+\bar{\eta})(t_2-t_1). $$
This implies the desired equicontinuity of the family $\{\hat{\f}^\kk\}_{\kk\geq 0}$. In fact, for every $\epsilon>0$ it is sufficient to take $\abs{t_2-t_1}<\frac{\epsilon}{2mn(1+\bar{\eta})}$, independently of $\kk$, in order to have $\norm{\hat{\f}^\kk(t_2)-\hat{\f}^\kk(t_1)}_1<\epsilon$. \qedhere
\end{enumerate}
\end{proof}

\paragraph*{Step 3}
Now we are in a position to state that a mild solution to Problem~\eqref{IBVP} does indeed exist.

\begin{theorem}[Existence]
Let Assumptions~\ref{ip4b}, \ref{ip5} hold. Then $\f\in\B$ found in Lemma~\ref{lemma2} is a mild solution to Problem~\eqref{IBVP} in the sense of Definition~\ref{def:mild.solution}.
\label{theorem2}
\end{theorem}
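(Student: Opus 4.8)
The plan is to show that the function $\f\in\B$ produced by Lemma~\ref{lemma2} satisfies the integral equation of the mild formulation~\eqref{IBVPmild}; since $\f\in\B$ already yields non-negativity and boundedness, this is all that is needed to conclude via Definition~\ref{def:mild.solution}. Write
$$ \Psi_{ij}[\u]:=v_j\bigl(\Phi_{i-1,i}[\u]u_{i-1,j}-\Phi_{i,i+1}[\u]u_{ij}\bigr)-G_{ij}[\u,\,\u]+u_{ij}L_{ij}[\u] $$
for the integrand at the right-hand side of~\eqref{IBVPmild}, with the boundary conventions $u_{0,j}=\bar{f}_j$ and $\Phi_{0,1}[\u]=\bar{\Phi}$ understood when $i=1$. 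Iterating the one-step scheme~\eqref{discrete} from $\f^{0,\kk}=\f^0$ gives the summed form $f^{\n,\kk}_{ij}=f^0_{ij}+\sum_{\ell=0}^{\n-1}\Delta{t}_\kk\,\Psi_{ij}[\f^{\ell,\kk}]$. Introducing the \emph{piecewise-constant} interpolant $\bar{\f}^\kk(s):=\f^{\ell,\kk}$ for $s\in[t^\kk_\ell,\,t^\kk_{\ell+1})$, this sum is exactly $\int_0^{t^\kk_\n}\Psi_{ij}[\bar{\f}^\kk](s)\,ds$. A short computation from~\eqref{interpolante}, using $f^{\bar{\n},\kk}_{ij}-f^{\bar{\n}-1,\kk}_{ij}=\Delta{t}_\kk\Psi_{ij}[\f^{\bar{\n}-1,\kk}]$ and the fact that $\bar{\f}^\kk$ is constant on $[t^\kk_{\bar{\n}-1},\,t)$, then yields the \emph{exact} identity, for every $t\in[0,\,T_\M]$,
$$ \hat{f}^\kk_{ij}(t)=f^0_{ij}+\int_0^t\Psi_{ij}[\bar{\f}^\kk](s)\,ds . $$

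The remaining work is to pass to the limit $\kk\to\infty$ (along the subsequence of Lemma~\ref{lemma2}) in this identity. On the left, $\hat{\f}^\kk\to\f$ uniformly. For the right, one first checks that $\bar{\f}^\kk\to\f$ uniformly as well: on $[t^\kk_{\bar{\n}-1},\,t^\kk_{\bar{\n}})$ one has $\norm{\hat{\f}^\kk(t)-\bar{\f}^\kk(t)}_1\leq\norm{\f^{\bar{\n},\kk}-\f^{\bar{\n}-1,\kk}}_1\leq 2mn(1+\bar{\eta})\Delta{t}_\kk\to 0$ by the one-step bound established inside the proof of Lemma~\ref{lemma2}. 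Next, under Assumptions~\ref{ip4b}, \ref{ip5} the map $\u\mapsto\Psi_{ij}[\u]$ is Lipschitz continuous on $\B$ uniformly in $t$: this is exactly the content of the estimates~\eqref{1term}--\eqref{3term} from the proof of Theorem~\ref{theorem1} (flux-limiter terms controlled by $\Lip(\Phi)$, including the boundary limiter through Assumption~\ref{ip3}; gain term by $\bar{\eta}$, $\Lip(\eta(i))$, $\Lip(A^j_{hk}(i))$; loss term by $\bar{\eta}$, $\Lip(\eta(i))$). Hence $\sup_s\abs{\Psi_{ij}[\bar{\f}^\kk](s)-\Psi_{ij}[\f](s)}\leq\C\norm{\bar{\f}^\kk-\f}_\infty\to 0$, so $\int_0^t\Psi_{ij}[\bar{\f}^\kk](s)\,ds\to\int_0^t\Psi_{ij}[\f](s)\,ds$ uniformly in $t$. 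Passing to the limit gives $f_{ij}(t)=f^0_{ij}+\int_0^t\Psi_{ij}[\f](s)\,ds$, i.e., the first line of~\eqref{IBVPmild}. The other two lines are immediate: $\hat{f}^\kk_{0,j}$ interpolates the values $\bar{f}_j(t^\kk_\n)$ and so converges uniformly to the continuous function $\bar{f}_j$, whence $f_{0,j}\equiv\bar{f}_j$; and $\Phi_{0,1}=\Phi(\bar{\rho},\,\rho_1)$ inherits in the limit the value $\bar{\Phi}$ by continuity of $\Phi$ and convergence of $\rho_1^{\,\kk}$. Thus $\f\in\B$ is a mild solution; once existence is known, Theorem~\ref{theorem1} shows the limit is independent of the subsequence, so the whole sequence $\{\hat{\f}^\kk\}$ converges to it.

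The main obstacle is the limit passage through the nonlinear flux-limiter and interaction terms: one must control $\Psi[\bar{\f}^\kk]-\Psi[\f]$ uniformly in time, which is precisely where Assumptions~\ref{ip4b} and~\ref{ip5} enter and where the Lipschitz estimates of Theorem~\ref{theorem1} are reused. A secondary point requiring care is the simultaneous bookkeeping of the two interpolants — the piecewise-linear $\hat{\f}^\kk$, in which the exact discrete identity is most naturally written and whose limit is controlled by Lemma~\ref{lemma2}, and the piecewise-constant $\bar{\f}^\kk$, which is the correct argument of the nonlinearities inside the time integral — together with the fact that Lemma~\ref{lemma2} delivers convergence only up to a subsequence until uniqueness is invoked.
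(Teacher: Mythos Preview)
Your proof is correct and takes a genuinely different, cleaner route than the paper. The paper inserts the piecewise-linear interpolant $\hat{\f}^\kk$ directly into the mild formulation~\eqref{IBVPmild}, which produces a residual $e^\kk_{ij}$; it then expands $d\hat{f}^\kk_{ij}/dt$, $\Phi_{i\pm 1,i}[\hat{\f}^\kk]\hat{f}^\kk$, $G_{ij}[\hat{\f}^\kk,\hat{\f}^\kk]$, $\hat{f}^\kk_{ij}L_{ij}[\hat{\f}^\kk]$ explicitly in terms of the nodal values $\f^{\n,\kk}$, collects all the mismatched pieces into $e^\kk_{ij}$, and bounds $\abs{e^\kk_{ij}}\leq\C\Delta t_\kk$ using the one-step estimate and the Lipschitz hypotheses. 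You instead introduce the auxiliary piecewise-constant interpolant $\bar{\f}^\kk$ and observe that the discrete scheme telescopes \emph{exactly} to $\hat{f}^\kk_{ij}(t)=f^0_{ij}+\int_0^t\Psi_{ij}[\bar{\f}^\kk](s)\,ds$, with no residual at all; the limit passage then reduces to two transparent facts --- $\norm{\hat{\f}^\kk-\bar{\f}^\kk}_\infty=O(\Delta t_\kk)$ via the one-step bound from Lemma~\ref{lemma2}, and Lipschitz continuity of $\u\mapsto\Psi_{ij}[\u]$ on the pointwise constraints defining $\B$ via the estimates~\eqref{1term}--\eqref{3term}. Both proofs rest on the same ingredients (Assumptions~\ref{ip4b}, \ref{ip5}, the one-step bound, and the Lipschitz estimates of Theorem~\ref{theorem1}), but your decomposition separates the discrete-to-integral rewriting from the stability analysis, avoiding the paper's term-by-term expansion of $e^\kk_{ij}$. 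A small point to tidy: when $i=1$ the boundary values entering the exact identity are the \emph{discretely sampled} $\bar{f}_j(t^\kk_\ell)$, $\bar{\Phi}(t^\kk_\ell)$, not the continuous $\bar{f}_j(s)$, $\bar{\Phi}(s)$; the discrepancy is $O(\omega(\Delta t_\kk))$ by uniform continuity of the data and vanishes in the limit, but it is worth stating this explicitly so that the ``exact identity'' and the limit integral match up.
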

\begin{proof}
Owing to the convergence result stated in Lemma~\ref{lemma2}, we can guess that, for a fixed $\kk$, the function $\hat{\f}^\kk$ defined componentwise by Eq.~\eqref{interpolante} is a sort of approximation of a possible mild solution to Problem~\eqref{IBVP}. Therefore, if we plug $\hat{f}^\kk_{ij}$ into Eq.~\eqref{IBVPmild} we expect a reminder $e^\kk_{ij}$ to appear:
\begin{align}
	\begin{aligned}[b]
	\hat{f}^\kk_{ij}(t)-f^0_{ij}+\int_0^t &\left \{v_j(\Phi_{i,i+1}[\hat{\f}^\kk](s)\hat{f}^\kk_{ij}(s)
		-\Phi_{i-1,i}[\hat{\f}^\kk](s)\hat{f}^\kk_{i-1,j}(s))\right. \\ 
	& \left.-G_{ij}[\hat{\f}^\kk,\,\hat{\f}^\kk](s)+\hat{f}^\kk_{ij}(s)L_{ij}[\hat{\f}^\kk](s)\right\}\,ds
		=\int_0^t e^\kk_{ij}(s)\,ds.
	\end{aligned}
	\label{int:sol}
\end{align}

At this point we would like to take the limit $\kk\to\infty$ using Lemma~\ref{lemma2}. This will tell us that the distribution function $\f$ found in the latter is indeed a mild solution to Problem~\eqref{IBVP} provided:
\begin{enumerate}
\item[(i)] the left-hand side of Eq.~\eqref{int:sol} tends to the corresponding expression evaluated for $\f$ rather than for $\hat{\f}^\kk$;
\item[(ii)] the right-hand side of Eq.~\eqref{int:sol} tends to zero.
\end{enumerate}

In order to prove (i) it is sufficient to observe that all terms appearing in the integral are bounded by an integrable constant, thus by dominated convergence it is possible to commute the limit in $\kk$ with the integral in $t$. Using Lemma~\ref{lemma2} and taking into account Eqs.~\eqref{1term}--\eqref{3term} with $\g=\hat{\f}^\kk$ yields then:
\begin{align*}
	\lim_{\kk\to\infty}\Bigl\vert\hat{f}^\kk_{ij}(t)-f^0_{ij}+\int_0^t &\left \{v_j(\Phi_{i,i+1}[\hat{\f}^\kk](s)\hat{f}^\kk_{ij}(s)
		-\Phi_{i-1,i}[\hat{\f}^\kk](s)\hat{f}^\kk_{i-1,j}(s))\right. \\ 
	& \left.-G_{ij}[\hat{\f}^\kk,\,\hat{\f}^\kk](s)+\hat{f}^\kk_{ij}(s)L_{ij}[\hat{\f}^\kk](s)\right\}\,ds\Bigr\vert \\
	= \Bigl\vert f_{ij}(t)-f^0_{ij}+\int_0^t &\left \{v_j(\Phi_{i,i+1}[\f](s)f_{ij}(s)
		-\Phi_{i-1,i}[\f](s)f_{i-1,j}(s))\right. \\
	& \left.-G_{ij}[\f,\,\f](s)+f_{ij}(s)L_{ij}[\f](s)\right\}\,ds\Bigr\vert
\end{align*}
as desired.

It remains to prove (ii), for which we need to find an expression of the term $e^\kk_{ij}$. Using Eqs.~\eqref{discrete}, \eqref{interpolante} we write:
\begin{itemize}
\item $\displaystyle{
	\begin{aligned}[t]
		\frac{d\hat{f}^\kk_{ij}}{dt}=\sum_{\n=1}^{N_\kk} &\left \{v_j\left(\Phi_{i-1,i}[\f^{\n-1,\kk}]f^{\n-1,\kk}_{i-1,j}
			-\Phi_{i,i+1}[\f^{\n-1,\kk}]f^{\n-1,\kk}_{ij}\right)\right. \\
		&\left.+G_{ij}[\f^{\n-1,\kk},\,\f^{\n-1,\kk}]-f^{\n-1,\kk}_{ij}L_{ij}[\f^{\n-1,\kk}]\right\};
	\end{aligned}
	} $
\item $\displaystyle{
	\begin{aligned}[t]
		v_j &\left(\Phi_{i,i+1}[\hat{\f}^\kk]\hat{f}^\kk_{ij}-\Phi_{i-1,i}[\hat{\f}^\kk]\hat{f}^\kk_{i-1,j}\right) \\
		&= v_j\sum_{\n=1}^{N_\kk}\left\{\Phi_{i,i+1}[\hat{\f}^\kk]f^{\n-1,\kk}_{ij}-\Phi_{i-1,i}[\f^{\n,\kk}]f^{\n-1,\kk}_{i-1,j}
			+\frac{t-t^\kk_{\n-1}}{\Delta{t}_\kk}\Phi_{i,i+1}[\hat{\f}^\kk]\left(f^{\n,\kk}_{ij}-f^{\n-1,\kk}_{ij}\right)\right. \\
		&\phantom{=} \qquad\qquad\left.+\frac{t-t^\kk_{\n-1}}{\Delta{t}_\kk}\Phi_{i-1,i}[\hat{\f}^\kk]
			\left(f^{\n-1,\kk}_{ij}-f^{\n-1,\kk}_{i-1,j}\right)\right\}\1_{[t^\kk_{\n-1},\,t^\kk_{\n}]};
	\end{aligned}
	} $
\item $\displaystyle{
	\begin{aligned}[t]
		G_{ij}[\hat{\f}^\kk,\,\hat{\f}^\kk]=\sum_{h,k=1}^{n}\sum_{\n=1}^{N_\kk}\eta[\hat{\f}^\kk]
			&A^j_{hk}[\hat{\f}^\kk]\Biggl\{f^{\n-1,\kk}_{ih}f^{\n-1,\kk}_{ik}
				+\frac{t-t^\kk_{\n-1}}{\Delta{t}_\kk}f^{\n-1,\kk}_{ih}\left(f^{\n,\kk}_{ik}-f^{\n-1,\kk}_{ik}\right) \\
		& +\frac{t-t^\kk_{\n-1}}{\Delta{t}_\kk}f^{\n-1,\kk}_{ik}\left(f^{\n,\kk}_{ih}-f^{\n-1,\kk}_{ih}\right) \\
		& +\left(\frac{t-t^\kk_{\n-1}}{\Delta{t}_\kk}\right)^2\left(f^{\n,\kk}_{ih}-f^{\n-1,\kk}_{ih}\right)
			\left(f^{\n,\kk}_{ik}-f^{\n-1,\kk}_{ik}\right)\Biggr\}\1_{[t^\kk_{\n-1},\,t^\kk_{\n}]};
	\end{aligned}
	} $
\item $\displaystyle{
	\begin{aligned}[t]
		\hat{f}^\kk_{ij}L_{ij}[\hat{\f}^\kk]=\sum_{k=1}^{n}\sum_{\n=1}^{N_\kk}\eta[\hat{\f}^\kk]
			&\Biggl\{f^{\n-1,\kk}_{ij}f^{\n-1,\kk}_{ik}+\frac{t-t^\kk_{\n-1}}{\Delta{t}_\kk}f^{\n-1,\kk}_{ij}
				\left(f^{\n,\kk}_{ik}-f^{\n-1,\kk}_{ik}\right) \\
		&+\frac{t-t^\kk_{\n-1}}{\Delta{t}_\kk}f^{\n-1,\kk}_{ik}\left(f^{\n,\kk}_{ij}-f^{\n-1,\kk}_{ij}\right) \\
		&+\left(\frac{t-t^\kk_{\n-1}}{\Delta t_\kk}\right)^2\left(f^{\n,\kk}_{ij}-f^{\n-1,\kk}_{ij}\right)
				\left(f^{\n,\kk}_{ik}-f^{\n-1,\kk}_{ik}\right)\Biggr\}\1_{[t^\kk_{\n-1},\,t^\kk_{\n}]},
	\end{aligned}
	} $
\end{itemize}
then we take the time derivative of both sides of Eq.~\eqref{int:sol} and insert these expressions at the left-hand side. As a result, we get an expression of $e^\kk_{ij}$ in which terms featuring the difference $f^{\n,\kk}_{ij}-f^{\n-1,\kk}_{ij}$ appear along with others involving differences between pairs of flux limiters, interaction rates, and tables of games evaluated at $\f^{\n-1,\kk}$ and $\hat{\f}^\kk$, respectively. Applying estimate~\eqref{A_3} to the first ones and invoking the Lipschitz continuity of the second ones we finally obtain:
$$ \abs{e^\kk_{ij}(t)}\leq\C\Delta{t}_\kk $$
for a suitable constant $\C>0$ independent of $\kk$, hence
$$ \int_0^t\abs{e^\kk_{ij}(s)}\,ds\leq\C\Delta{t}_\kk T_\M\xrightarrow{\kk\to\infty}0 $$
which completes the proof.
\end{proof}

\begin{corollary}[Improved regularity]
The mild solution to Problem~\eqref{IBVP} is of class $C^1$ in $[0,\,T_\M]$, hence it is actually a classical solution.
\end{corollary}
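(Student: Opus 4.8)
The statement follows by a standard bootstrap applied to the mild formulation~\eqref{IBVPmild}. The plan is: (a) show that, along the mild solution, the integrand on the right-hand side of~\eqref{IBVPmild} is a continuous function of the integration variable; (b) conclude via the fundamental theorem of calculus that the integral term, and hence each $f_{ij}$, is continuously differentiable in $t$; (c) differentiate~\eqref{IBVPmild} to recover the differential system in~\eqref{IBVP} pointwise, so that $\f$ is a classical solution.

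For step (a) I would argue as follows. By Definition~\ref{def:mild.solution} the mild solution satisfies $\f\in\B\subset X_{T_\M}$, so every component $f_{ij}$, and therefore every cell density $\rho_i=\sum_{j=1}^n f_{ij}$, is continuous on $[0,\,T_\M]$. Then $s\mapsto\Phi_{i,i+1}[\f](s)=\Phi(\rho_i(s),\,\rho_{i+1}(s))$ is continuous, since $\Phi$ is (Lipschitz) continuous by Assumption~\ref{ip2}; under Assumptions~\ref{ip4b} and~\ref{ip5} the maps $s\mapsto\eta(i)[\f](s)$ and $s\mapsto A^j_{hk}(i)[\f](s)$ are continuous as well, being Lipschitz functionals evaluated along the continuous curve $\f$. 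Recalling~\eqref{dis_op}, the gain term $G_{ij}[\f,\,\f](s)$ and the loss term $f_{ij}(s)L_{ij}[\f](s)$ are finite sums and products of these continuous functions, hence continuous; likewise the transport term $v_j(\Phi_{i-1,i}[\f](s)f_{i-1,j}(s)-\Phi_{i,i+1}[\f](s)f_{ij}(s))$ is continuous, the boundary contribution $f_{0j}=\bar f_j$ being continuous by Assumption~\ref{ip1}. Thus the whole integrand in~\eqref{IBVPmild} is continuous on $[0,\,T_\M]$.

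For steps (b)--(c), the fundamental theorem of calculus gives that $t\mapsto\int_0^t\{\cdots\}\,ds$ is of class $C^1$ on $[0,\,T_\M]$ (with one-sided derivatives at the endpoints) and its derivative equals the integrand at $s=t$; hence $f_{ij}\in C^1([0,\,T_\M])$, and differentiating~\eqref{IBVPmild} yields
\begin{equation*}
	\frac{df_{ij}}{dt}+v_j\left(\Phi_{i,i+1}[\f]f_{ij}-\Phi_{i-1,i}[\f]f_{i-1,j}\right)
		=G_{ij}[\f,\,\f]-f_{ij}L_{ij}[\f], \qquad t\in[0,\,T_\M],
\end{equation*}
which is precisely the differential equation in~\eqref{IBVP}. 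Since $f_{ij}(0)=f^0_{ij}$ is read off from~\eqref{IBVPmild} at $t=0$ and the boundary conditions $f_{0j}=\bar f_j$, $\Phi_{0,1}=\bar\Phi$ are already part of the mild formulation, $\f$ solves~\eqref{IBVP} in the classical sense. There is no genuine obstacle in this argument: the only point deserving care is the time-continuity of the nonlocal coefficients $\Phi[\f]$, $\eta[\f]$, $A[\f]$, which is exactly what Assumptions~\ref{ip2}(iii), \ref{ip4b}, \ref{ip5} guarantee once composed with the already established continuity of the solution curve $\f$.
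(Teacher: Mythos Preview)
Your argument is correct and is essentially the same bootstrap as in the paper's proof, only spelled out in greater detail: the paper simply observes that the continuity of $\f$ makes the right-hand side of~\eqref{IBVP} continuous in $t$, hence $df_{ij}/dt$ is continuous, whereas you make explicit the role of Assumptions~\ref{ip2}(iii), \ref{ip4b}, \ref{ip5} and the fundamental theorem of calculus applied to~\eqref{IBVPmild}.
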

\begin{proof}
The continuity of $\f$ implies that the right-hand side of Eq.~\eqref{IBVP} is continuous in $t$ for each $i=1,\,\dots,\,m$ and $j=1,\,\dots,\,n$. Hence $\frac{df_{ij}}{dt}$ is continuous.
\end{proof}

\section*{Acknowledgments}
A. Tosin was partially funded by the European Commission under the 7th Framework Program (grant No.\ 257462 HYCON2 Network of Excellence) and by the Google Research Award ``Multipopulation Models for Vehicular Traffic and Pedestrians'', 2012--2013.

\bibliographystyle{plain}
\bibliography{FlTa-discrspace}
\end{document}